\documentclass[%
  aip,
  jcp,
  amsmath,amssymb,
  reprint,
  longbibliography,
]{revtex4-2}

\usepackage[T1]{fontenc}
\usepackage{graphicx}
\usepackage{bm}
\usepackage{booktabs}
\usepackage[caption=false]{subfig}
\usepackage{url}
\usepackage{amsthm}
\usepackage{mathtools}
\usepackage{xcolor}
\usepackage[pdftex, colorlinks=true]{hyperref}

\newlength{\colfigwidth}
\makeatletter
\AtBeginDocument{\setlength{\colfigwidth}{\if@twocolumn\columnwidth\else0.5\textwidth\fi}}
\makeatother

\newtheorem{theorem}{Theorem}
\newtheorem{proposition}[theorem]{Proposition}
\newtheorem{lemma}[theorem]{Lemma}

\theoremstyle{definition}

\theoremstyle{remark}

\newcommand{\R}{\mathbb{R}}

\newcommand{\tr}{\operatorname{tr}}
\newcommand{\Eloc}{E_{\mathrm{loc}}}

\begin{document}

\title{Equivariant Continuous Normalizing Flows with Offline Sampling for Fermionic Ground State Estimation}

\author{Sam Cochran}
\email{sam@sygaldry.com}
% \email{samco@umich.edu}
\affiliation{University of Michigan, Ann Arbor, MI, USA}
\affiliation{Sygaldry Technologies Inc., Ann Arbor, MI, USA}

\author{James B. Larsen}
\affiliation{University of Michigan, Ann Arbor, MI, USA}

\author{Andrew Wray}
\affiliation{Sygaldry Technologies Inc., Ann Arbor, MI, USA}

\author{Michael J. Keiser}
\affiliation{Sygaldry Technologies Inc., Ann Arbor, MI, USA}

\author{Chad Rigetti}
\affiliation{Sygaldry Technologies Inc., Ann Arbor, MI, USA}

\author{Shravan Veerapaneni}
\affiliation{University of Michigan, Ann Arbor, MI, USA}
\affiliation{Sygaldry Technologies Inc., Ann Arbor, MI, USA}

\date{\today}

\begin{abstract}
We introduce a framework for fermionic variational Monte Carlo (VMC) in which a continuous normalizing flow (CNF) refines a fixed antisymmetric base wavefunction. The flow is implemented as a permutation-equivariant neural ODE, a smooth, topology-preserving map that learns correlations not captured by the base; equivariance preserves the antisymmetry of the base, so the flow can in principle improve any antisymmetric ansatz that can be sampled efficiently. We demonstrate this using Slater and Jastrow-Slater bases, though more expressive choices are admissible. Exact samples from the flow's Born distribution are obtained by pushing pre-cached base samples through the forward ODE, requiring no Markov chain Monte Carlo (MCMC) at training time. The base samples are generated offline and reused across training batches and runs, decoupling sample generation from parameter optimization and enabling embarrassingly parallel training across multiple GPUs. We introduce three novel permutation-equivariant vector field architectures: \emph{Pairwise Deep Sets} (PDS), \emph{FermiNet Vector Fields} (FVF), and \emph{Pairwise Deep Sets Gradient} (PDSG), each offering a different balance of expressivity and computational cost. We further introduce an \emph{augmented dynamics} formulation for kinetic energy computation that co-evolves the required derivative quantities as ODE state variables, eliminating differentiation through the ODE trajectory and yielding significant reductions in wall-clock time and memory.
Training runs on systems of harmonically trapped spinless electrons demonstrate ground-state energies below CISD reference values. Scaling experiments demonstrate near-ideal strong scaling from 1 to 128 NVIDIA A100s using 32 GPU nodes of NERSC's Perlmutter supercomputer for systems of up to $N = 48$ particles in three dimensions.

\end{abstract}

\maketitle

%%====================================================================
\section{Introduction}
\label{sec:intro}
%%====================================================================

Computing the ground state of fermionic many-body systems is a central problem in quantum chemistry and condensed matter physics. 
% underpinning the prediction of molecular energies, reaction barriers, and material properties from first principles. 
The fundamental challenge stems from the exponential growth of the Hilbert space dimension with the number of particles. 
For correlated systems, exact diagonalization (full configuration interaction) scales exponentially with particle number, becoming intractable beyond a handful of electrons. Coupled cluster with singles, doubles, and perturbative triples [CCSD(T)], one of the most accurate polynomial-time methods available, scales as $\mathcal{O}(N^7)$ with electron count and remains practical only for modest system sizes~\cite{szabo1996modern,coupled_cluster}. Density functional theory provides a more scalable alternative but relies on approximate exchange-correlation functionals that can fail qualitatively for strongly correlated systems, transition-metal chemistry, and van der Waals complexes \cite{dft_paper}.

Quantum Monte Carlo (QMC) methods occupy a complementary position within this hierarchy, achieving high accuracy with polynomial scaling by representing the wavefunction explicitly and evaluating high-dimensional integrals stochastically~\cite{foulkes2001}. In particular, the variational Monte Carlo (VMC) approach optimizes a parameterized wavefunction ansatz $\psi_\theta$ to minimize the energy expectation value, estimated via Monte Carlo integration over particle configurations sampled from the Born distribution $|\psi_\theta|^2 / \|\psi_\theta\|^2$. The quality of the result depends on the expressivity of the chosen ansatz. Classical ans\"atze such as Jastrow-Slater wavefunctions and backflow-transformed determinants~\cite{backflow} offer systematic improvement over mean-field theory at modest cost, but are limited in their ability to capture the many-body correlations present in strongly interacting systems.

Deep neural networks have emerged as a powerful class of VMC wavefunction ans\"atze, with the neural quantum states (NQS) framework~\cite{carleo_nqs} demonstrating that neural networks can accurately approximate high-dimensional quantum states. Architectures such as PauliNet~\cite{paulinet} and FermiNet~\cite{ferminet}, which incorporate physical symmetries and correlations directly into the network design, have achieved near-exact accuracy on small molecules within the first-quantization framework, substantially improving upon classical VMC ans\"atze. Subsequent architectures, including PsiFormer~\cite{vonglehn2023psiformer}, LapNet~\cite{lapnet}, and FIRE~\cite{fire}, have further improved accuracy and scalability.

A key challenge shared by each of these NQS approaches is the difficulty of sampling particle configurations during training. Because their Born distributions 
% $\Pi_\theta \propto |\psi_\theta|^2$ 
admit no tractable closed-form sampler, MCMC sampling is a practical necessity. MCMC chains are inherently sequential, and mixing times increase as the configuration space dimension grows with system size. These difficulties compound during training, since the distribution shifts with every parameter update and the chain must be re-equilibrated accordingly. At scale, the sampling step can dominate the per-iteration cost, becoming the primary obstacle to parallelization across hardware accelerators~\cite{zhao2021overcoming}.

% Within the context of first-quantized fermionic systems, 
Continuous normalizing flows (CNFs) offer a structurally distinct approach that avoids this bottleneck entirely~\cite{theoretical_framework}. Where neural quantum states parameterize the entire wavefunction as a neural network, a CNF instead retains a fixed classical base wavefunction and learns only a refinement of it, building on established classical ans\"atze rather than learning the wavefunction from scratch. The base is transformed through a neural ODE: a smooth, topology-preserving map that learns the correlations the base does not capture.
The antisymmetry required of fermionic wavefunctions reduces to two structural conditions: an antisymmetric base wavefunction and a permutation-equivariant ODE vector field. In principle, any antisymmetric base that can be sampled efficiently may be used; we present results using Slater and Jastrow-Slater bases.
The flow's Born distribution admits exact, independent sampling by pushing base samples forward through the ODE, rendering MCMC unnecessary at training time.

% Within the context of first-quantized fermionic systems, continuous normalizing flows (CNFs) offer a structurally distinct approach that avoids this bottleneck entirely~\cite{theoretical_framework}. A CNF defines a wavefunction ansatz by transforming a fixed base wavefunction through a learned neural ODE.
% The flow's Born distribution admits exact, independent sampling by pushing base samples forward through the ODE, rendering MCMC unnecessary at training time.
% The antisymmetry required of fermionic wavefunctions reduces to two structural conditions: an antisymmetric base wavefunction and a permutation-equivariant ODE vector field.

This paper presents three technical contributions toward a practical, large-scale realization of CNF-based VMC for continuous-space fermionic systems in three dimensions.
\emph{First}, we introduce three novel permutation-equivariant vector field architectures: \emph{Pairwise Deep Sets} (PDS), which extends Deep Sets~\cite{deep_sets, bilos} with a dedicated pairwise stream for direct access to interparticle geometry; \emph{FermiNet Vector Fields} (FVF), which adapts the multi-layer interaction structure of FermiNet~\cite{ferminet} as a CNF vector field; and \emph{Pairwise Deep Sets Gradient} (PDSG), which parameterizes the vector field as the gradient of a permutation-invariant scalar potential.
\emph{Second}, we introduce an \emph{augmented dynamics} formulation for efficient kinetic energy computation, in which all derivative quantities required for local energy evaluation are co-evolved as augmented ODE state variables, eliminating the need to differentiate through the ODE solve. This enables use of the forward Laplacian method~\cite{lapnet} within each ODE step and yields substantial savings in both wall-clock time and memory over standard forward-over-reverse automatic differentiation.
\emph{Third}, we utilize an offline sampling structure that enables data-parallel MinSR~\cite{minsr} training and makes richer base wavefunctions practical, since the base need only be sampled offline.
% we demonstrate this with a Jastrow-Slater base whose Jastrow factor encodes pairwise correlations prior to flow training.
% \emph{Third}, we utilize an offline sampling structure which enables data-parallel MinSR~\cite{minsr} training and supports expressive base wavefunctions, demonstrated here with a Jastrow-Slater base whose Jastrow factor encodes pairwise correlations prior to flow training.

Numerical experiments on harmonically trapped spinless electrons in three dimensions demonstrate ground-state energies surpassing CISD reference values (see Appendix~\ref{app:cisd}) for $N = 2$ to $N = 35$ electrons (configuration space up to 105 dimensions). Near-ideal strong GPU scaling from 1 to 128 NVIDIA A100s is also demonstrated for systems of up to $N = 48$ electrons (configuration space up to 144 dimensions).

The paper is organized as follows. Section~\ref{sec:background} reviews the VMC framework, stochastic reconfiguration, fermionic wavefunction ans\"atze, the MCMC sampling problem, CNF wavefunctions, and nodal and cusp structure. Section~\ref{sec:related} surveys related work. Sections~\ref{sec:equivariant_architectures}--\ref{sec:parallelization} describe the proposed architectures, augmented dynamics, and parallelization strategy. Section~\ref{sec:experiments} presents numerical experiments. Section~\ref{sec:discussion} discusses physical implications, limitations, and directions for future work. Section~\ref{sec:conclusion} provides concluding remarks.

%%====================================================================
\section{Background}
\label{sec:background}
%%====================================================================

\subsection{Variational Quantum Monte Carlo}
\label{sec:vmc}

Consider a quantum system of $N$ particles in $D$ spatial dimensions, with Hamiltonian
\begin{equation}
  \hat{H} = -\frac{1}{2}\sum_{i=1}^N \nabla_i^2
           + \sum_{i=1}^N V(x_i)
           + \sum_{i < j} U(|x_i - x_j|),
  \label{eq:hamiltonian}
\end{equation}
where $V$ is a one-body external potential and $U$ a pairwise interaction. The Hilbert space for this system is $L^2(\mathbb{R}^{ND})$, with $L^2$ norm $\|\psi\|^2 := \int |\psi(x)|^2\,\mathrm{d}x$. Wavefunctions are interpreted probabilistically in terms of the Born rule: the probability density of observing a particle configuration $x \in \mathbb{R}^{ND}$ upon measurement is given by the Born distribution $\Pi_\psi(x) := |\psi(x)|^2 / \|\psi\|^2$.

For any wavefunction $\psi \neq 0$, the energy is given by the Rayleigh quotient
% \begin{equation}
%   E[\psi]
%       := \frac{\langle \psi | \hat{H} | \psi \rangle}{\langle \psi | \psi \rangle}
%        = \frac{\int \psi^*(x)\,(\hat{H}\psi)(x)\,\mathrm{d}x}{\langle \psi | \psi \rangle},
%   \label{eq:rayleigh}
% \end{equation}
\begin{equation}
  E[\psi]
      := \frac{\langle \psi | \hat{H} | \psi \rangle}{\langle \psi | \psi \rangle}
       = \frac{\int \psi^*(x)\,(\hat{H}\psi)(x)\,\mathrm{d}x}{\int \psi^*(x)\,\psi(x)\,\mathrm{d}x},
  \label{eq:rayleigh}
\end{equation}
and the ground-state energy is given by
\begin{equation}
    E_0 = E[\psi_0] =  \min_{\psi \neq 0} E[\psi].
\end{equation}
Computing $E_0$ exactly requires diagonalizing $\hat{H}$, which is intractable for systems of more than a handful of particles due to the exponential growth of the Hilbert space dimension with $N$.
The variational principle motivates an alternative approach: given any $\psi \in L^2(\mathbb{R}^{ND})$, it holds that $E[\psi] \geq E_0$, with equality if and only if $\psi$ is a ground state. One can thus parameterize a family $\{\psi_\theta\}$ of wavefunctions and minimize $E[\psi_\theta]$ over $\theta$. 

Direct numerical evaluation of the $ND$-dimensional integrals in~\eqref{eq:rayleigh} is prohibitively expensive for large systems. Instead, the Rayleigh quotient can be rewritten as an expectation over the Born distribution $\Pi_\theta$ by introducing the local energy
\begin{equation}\label{eq:local_energy}
    \Eloc(x) := \frac{(\hat{H}\psi_\theta)(x)}{\psi_\theta(x)},
\end{equation}
so that
\begin{equation}
    E[\psi_\theta] = \frac{\int |\psi_\theta(x)|^2\,\Eloc(x)\,\mathrm{d}x}{\|\psi_\theta\|^2} = \mathbb{E}_{\Pi_\theta}\bigl[\Eloc\bigr].
    \label{eq:eloc_expectation}
\end{equation}
This formulation admits the following unbiased estimator using Monte Carlo integration over a batch of $M$ samples:
\begin{equation}
    E[\psi_\theta] \approx \frac{1}{M}\sum_{i=1}^M \Eloc(x_i), \qquad x_i \sim \Pi_\theta,
    \label{eq:eloc_mc_estimator}
\end{equation}
where each $x_i \in \mathbb{R}^{ND}$ is a particle configuration drawn from the Born distribution $\Pi_\theta$.

% and the variational principle guarantees $E[\psi] \geq E_0$, with equality if and only if $\psi$ is a ground state. VMC exploits this bound: parameterize $\{\psi_\theta\}$ and minimize $E[\psi_\theta]$ over $\theta$. Introducing the local energy
% \begin{equation}
%   \Eloc(x) := \frac{(\hat{H}\psi_\theta)(x)}{\psi_\theta(x)},
% \end{equation}
% the energy can be written as an expectation over the Born distribution
% \begin{equation}
%   E[\psi_\theta] = \mathbb{E}_{\Pi_\theta}[\Eloc],
%   \label{eq:eloc_expectation}
% \end{equation}
% admitting the unbiased Monte Carlo estimator
% \begin{equation}
%   E[\psi_\theta] \approx \frac{1}{M}\sum_{i=1}^M \Eloc(x_i),
%   \qquad x_i \sim \Pi_\theta.
%   \label{eq:eloc_mc_estimator}
% \end{equation}
The \emph{zero-variance principle} states that $\mathrm{Var}_{\Pi_\theta}[\Eloc] = 0$
if $\psi_\theta$ is an exact eigenstate of $\hat{H}$. 
As a result, the noise in both the energy estimates and the parameter gradients decreases as the wavefunction approaches the exact ground state $\psi_0$. The local energy variance also serves as a key metric for evaluating convergence and wavefunction quality. 

For time-reversal symmetric systems without spin-orbit coupling, the Hamiltonian is real-valued in the position representation, and its eigenstates can be chosen to be real without loss of generality~\cite{foulkes2001}. We thus restrict to real ans\"atze $\psi_\theta\colon \mathbb{R}^{ND} \to \mathbb{R}$ for the remainder of this work. In practice, wavefunctions are represented in log-space as pairs $(\mathrm{sgn}(\psi_\theta(x)),\,\log|\psi_\theta(x)|)$ to avoid numerical over- and underflow.

\subsection{Stochastic Reconfiguration}
\label{sec:sr}

Stochastic reconfiguration (SR)~\cite{sr} is a natural gradient descent method that accounts for the geometry of the wavefunction manifold. Defining the logarithmic derivatives $O_k(x) := \partial_{\theta_k} \log|\psi_\theta(x)|$, the quantum geometric tensor (QGT) is
\begin{equation}
  S_{kl} := \mathbb{E}_{\Pi_\theta}[O_k O_l]
           - \mathbb{E}_{\Pi_\theta}[O_k]\,\mathbb{E}_{\Pi_\theta}[O_l].
\end{equation}
The SR update $S\,\delta\theta = -\nabla_\theta E$ performs gradient descent with respect to the Fisher-Rao metric on $\Pi_\theta$, rather than the Euclidean metric on parameters. The Euclidean metric on parameter space does not reflect the geometry of the wavefunction manifold: two parameter vectors nearby in Euclidean distance may define Born distributions that are far apart, while two distant parameter vectors may define nearly identical Born distributions. The Fisher-Rao metric corrects for this by measuring proximity in terms of the distributions themselves rather than the parameters. It is also invariant under reparameterization, so any two parameterizations of the same wavefunction manifold yield identical SR update directions.

Given a wavefunction ansatz with $N_p$ parameters and a batch of $M$ samples $\{x_i\}_{i=1}^M \sim \Pi_\theta$, $S$ is estimated by collecting normalized mean-centered logarithmic derivatives into an $M \times N_p$ matrix $O$ constructed as
\begin{equation}
  O_{ik} = \frac{1}{\sqrt{M}}\bigl(O_k(x_i) - \langle O_k \rangle\bigr),
\end{equation}
giving the sample estimate $S \approx O^\top O$.

While SR has been shown to accelerate convergence for quantum wavefunction optimization, forming and inverting $S \in \mathbb{R}^{N_p \times N_p}$ exactly requires $O(N_p^3)$ operations, which becomes prohibitive for large neural networks. 
A scalable variant called MinSR~\cite{minsr} avoids this by computing the SR update in the $M$-dimensional sample space rather than the $N_p$-dimensional parameter space. The dual matrix $T = OO^\top \in \mathbb{R}^{M \times M}$ shares the same non-zero eigenvalues as $S$,
% by the singular value decomposition, 
and assuming $M \leq N_p$, $S$ has rank at most $M$. The pseudo-inverse $S^+$ therefore contains no more information than $T^+$, and the following MinSR update is an exact reformulation of SR:
\begin{equation}\label{eq:nat_grad_update}
    \delta\theta = O^\top T^+ \varepsilon, \qquad \varepsilon_i = \frac{\Eloc(x_i) - \langle \Eloc \rangle}{\sqrt{M}},
\end{equation}
where $T^+$ denotes the pseudo-inverse with a soft cutoff on small eigenvalues. The dominant cost reduces from $O(N_p^3)$ to $O(M^3)$, making MinSR the more efficient formulation whenever $N_p \gg M$.

\subsection{Wavefunction ans\"atze for Fermionic Systems}
\label{sec:fermionic}

Suppose now that the particles in our system are fermions.
Writing $x = (x_1, \ldots, x_N)$ for the individual particle positions, the Pauli exclusion principle requires the wavefunction to be antisymmetric under exchange of any two particles,
\begin{equation}
    \psi(P_\sigma x) = \mathrm{sgn}(\sigma)\,\psi(x) \qquad \text{for all } \sigma \in S_N,
    \label{eq:antisymmetry}
\end{equation}
where $S_N$ is the symmetric group of permutations on $N$ elements and $P_\sigma$ permutes the particle labels according to $\sigma$. The canonical antisymmetric building block is the Slater determinant,
\begin{equation}
    \psi_{\mathrm{Slater}}(x) = \det\bigl[\phi_j(x_i)\bigr]_{i,j=1}^N
    \label{eq:slater}
\end{equation}
where each $\phi_j: \mathbb{R}^D \rightarrow \mathbb{R}$ is a single-particle orbital. Slater determinants capture the correct nodal structure of the non-interacting ground state, but cannot represent electron correlation, the many-body effects arising from particle interactions. Classical approaches for handling electron correlation include configuration interaction expansions~\cite{szabo1996modern}, backflow-transformed determinants~\cite{backflow}, and coupled-cluster methods~\cite{coupled_cluster}, each making different accuracy-cost tradeoffs.

More recently, deep neural networks have enabled a new class of highly expressive fermionic wavefunction ans\"atze. As universal approximators, neural networks are well-suited to learning arbitrary functions over the high-dimensional configuration space $\mathbb{R}^{ND}$, motivating the neural quantum states (NQS) framework~\cite{carleo_nqs}. 
% FermiNet~\cite{ferminet} and PauliNet~\cite{paulinet} build permutation-equivariant feature representations of particle configurations and read out a Slater determinant, enforcing antisymmetry by construction while allowing the network to learn arbitrary correlations. These NQS methods achieve near-exact accuracy on small molecules but require MCMC sampling from the learned Born distribution at every training iteration.
A prominent example is FermiNet~\cite{ferminet}, which builds one- and two-electron streams connected by cross-stream communication, with a Slater determinant readout that enforces antisymmetry by construction (see Appendix~\ref{app:ferminet}). FermiNet has been shown to achieve near-exact accuracy on small molecules.
Considerable innovation has followed, with many new architectures improving upon FermiNet~\cite{vonglehn2023psiformer, lapnet, fire}. However, a key challenge shared by each of these NQS methods is that the Born distribution $\Pi_\theta$ admits no tractable closed-form sampler, making an MCMC-based sampling procedure necessary.

\subsection{Markov Chain Monte Carlo Sampling}
\label{sec:mcmc}

The VMC estimator~\eqref{eq:eloc_mc_estimator} requires samples from $\Pi_\theta \propto |\psi_\theta|^2$. Direct sampling is intractable because it requires knowledge of the normalization constant $\|\psi_\theta\|^2 = \int |\psi_\theta(x)|^2\,\mathrm{d}x$, an $ND$-dimensional integral that cannot be evaluated in closed form for a general neural-network wavefunction. Markov chain Monte Carlo (MCMC) methods~\cite{metropolis,hastings} circumvent this by constructing an ergodic Markov chain whose stationary distribution is $\Pi_\theta$, requiring only pointwise evaluations of the unnormalized density $|\psi_\theta(x)|^2$ to produce samples.

In the context of VMC, however, MCMC sampling presents significant challenges for scalability. Because the chain is inherently sequential, each new configuration depends on the previous one, so sampling cannot be directly parallelized across independent draws. Moreover, consecutive samples are correlated, meaning that a naive chain does not produce the independent samples assumed by the Monte Carlo estimator~\eqref{eq:eloc_mc_estimator}. A \emph{burn-in} period is required at the start of each chain to allow it to reach stationarity, and \emph{thinning} (retaining only every $j$-th sample) reduces autocorrelation between retained samples. Both discard compute without producing usable samples. As the system size $N$ grows and the configuration space $\mathbb{R}^{ND}$ expands, the random walk mixes more slowly, and the number of burn-in and thinning steps required to produce unbiased and independent samples increases.
% and these requirements become more exacting. 
Running multiple independent chains in parallel and pooling their output can mitigate the issue. However, each chain must independently complete its own burn-in before contributing usable samples, and as burn-in and thinning requirements grow with $N$, the fraction of compute spent on discarded samples increases, causing parallel efficiency to degrade~\cite{zhao2021overcoming}.

These difficulties compound in the NQS training setting. Since $\Pi_\theta$ changes with each parameter update, MCMC must be re-run at every training iteration, paying the burn-in cost each time. Chains can be warm-started from the previous iteration's final state to reduce burn-in cost, but in order to produce unbiased samples, some burn-in is always required as the distribution evolves, and each burned-in chain produces only a single training batch before the parameters are updated and the distribution shifts.
Furthermore, the required burn-in length, Markov chain step size, and thinning interval are not known a priori and must be chosen carefully: too conservative wastes compute, while too aggressive yields correlated or biased estimates. If fixed before training begins, these parameters cannot adapt as $\Pi_\theta$ evolves with each gradient update; tuning them adaptively introduces its own overhead. 

\subsection{Continuous Normalizing Flows as Wavefunction ans\"atze}
\label{sec:cnf}

\begin{figure*}
  \centering
  \includegraphics[width=\linewidth]{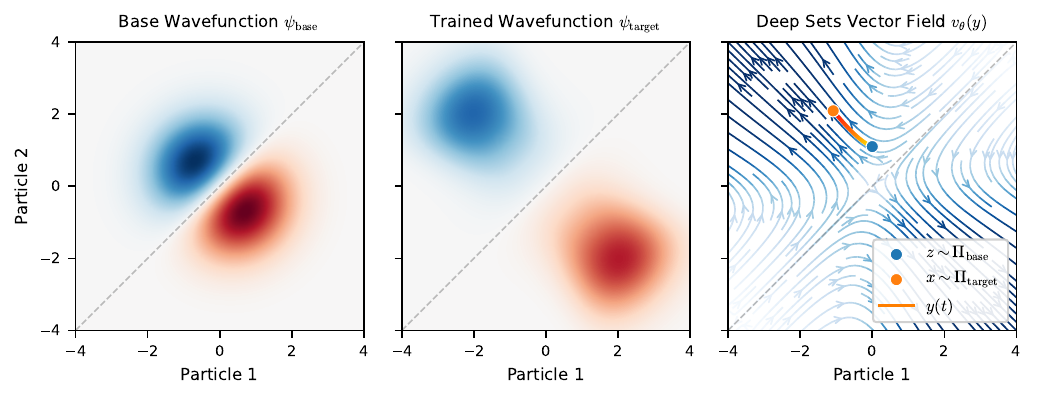}
  \caption{Visualization of a trained CNF wavefunction for two fermions in a one-dimensional harmonic trap with repulsive Coulomb interaction. \textit{Left:} Slater determinant base wavefunction $\psi_\mathrm{base}$, built from the first two harmonic oscillator orbitals; this is the exact ground state of the non-interacting system. Blue indicates positive values, red negative. Antisymmetry under particle exchange can be seen in the opposite signs of the two wavefunction lobes reflected across the Pauli exclusion diagonal $x_1 = x_2$. %: swapping $x_1$ and $x_2$ reflects across the nodal line $x_1 = x_2$ and negates $\psi$. 
  \textit{Center:} Trained wavefunction $\psi_\mathrm{target}$ after flow optimization; the increased lobe separation corresponds to the flow learning interparticle repulsion. Note that antisymmetry is preserved. \textit{Right:} Learned Deep Sets vector field $v_\theta$, overlaid with a single forward ODE trajectory $y(t)$ from a base sample $z \sim \Pi_\mathrm{base}$ (blue) to the corresponding target sample $x \sim \Pi_\mathrm{target}$ (orange). Permutation equivariance is visible as a reflective symmetry of the vector field across the diagonal $x_1 = x_2$.}
  \label{fig:flow_diagram}
\end{figure*}

Rather than defining $\psi_\theta$ directly as a neural network and relying on MCMC to draw samples, a normalizing flow defines the wavefunction through a learned transformation of a base wavefunction from which samples can be made readily available. Let $\psi_{\mathrm{base}}$ be a chosen base wavefunction and $f\colon \R^{ND} \to \R^{ND}$ a diffeomorphism. Writing $z = f^{-1}(x)$ for the preimage of a configuration $x$ under $f$, the change of variables formula gives
\begin{equation}
  \psi_{\mathrm{target}}(x)
      = \psi_{\mathrm{base}}\left(f^{-1}(x)\right)
        \left|\det\frac{\partial f^{-1}}{\partial x}\right|^{1/2},
  \label{eq:flow_wf}
\end{equation}
or equivalently in log-space,
\begin{equation}
  \log|\psi_{\mathrm{target}}(x)|
      = \log|\psi_{\mathrm{base}}(z)|
        + \tfrac{1}{2}\log\left|\det\frac{\partial z}{\partial x}\right|.
  \label{eq:log_flow_wf}
\end{equation}

The choice of parameterization for $f$ determines the tradeoff between expressivity, computational cost, and ease of inversion. Discrete normalizing flows build $f$ as a composition of finitely many invertible layers, each with an analytically tractable Jacobian determinant. A more flexible alternative is to define $f$ implicitly as the time-one map of a neural ODE: given a learned vector field $v(y)\colon \mathbb{R}^{ND} \to \mathbb{R}^{ND}$, integrating
\begin{equation}
  \frac{\mathrm{d}y}{\mathrm{d}t} = v(y), \qquad y(0) = z
  \label{eq:forward_ode}
\end{equation}
from $t = 0$ to $t = 1$ yields $x = y(1) = f(z)$. This continuous normalizing flow (CNF) formulation~\cite{neural_odes} places no structural constraints on $v$ beyond local Lipschitz continuity, which suffices for ODE well-posedness by the existence and uniqueness theorem. 
Throughout this work, $v$ is implemented using multilayer perceptrons (MLPs) with $\tanh$ activations (see Section~\ref{sec:equivariant_architectures}).
%For kinetic energy computation (Section~\ref{sec:efficient_laplacian}), Hessians of $v$ are required; the vector field is therefore implemented with $C^\infty$ activations such as $\tanh$, giving $v \in C^\infty$ and a $C^\infty$ flow map $f$ on its domain. Piecewise-linear activations such as ReLU, while locally Lipschitz, are excluded by this differentiability requirement.

The inverse $f^{-1}$ is obtained by integrating the ODE backwards in time from $x$, with no separately parameterized inverse. The log-determinant in~\eqref{eq:log_flow_wf} is tracked alongside the backwards integration via the instantaneous change of variables formula~\cite{neural_odes},
\begin{equation}
  \frac{\mathrm{d}}{\mathrm{d}t}\log\left|\det\frac{\partial y}{\partial z}\right|
      = \tr\frac{\partial v}{\partial y}(y),
  \label{eq:inst_logdet}
\end{equation}
so that a single augmented ODE solve yields both $z = f^{-1}(x)$ and $\log|\det(\partial z/\partial x)|$.

\begin{figure*}
    % \centering
    \includegraphics[width=.8\linewidth]{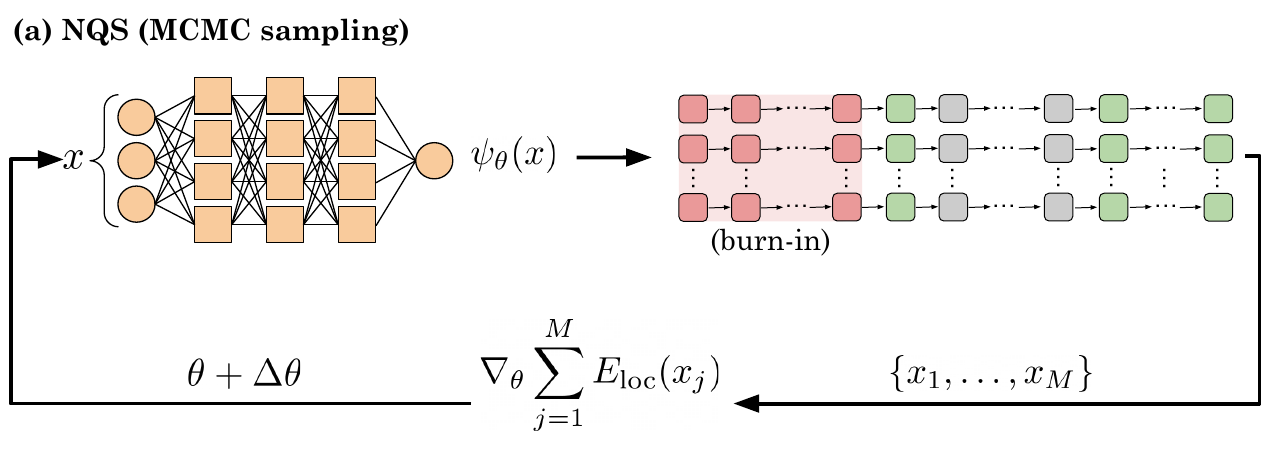}
    \includegraphics[width=.8\linewidth]{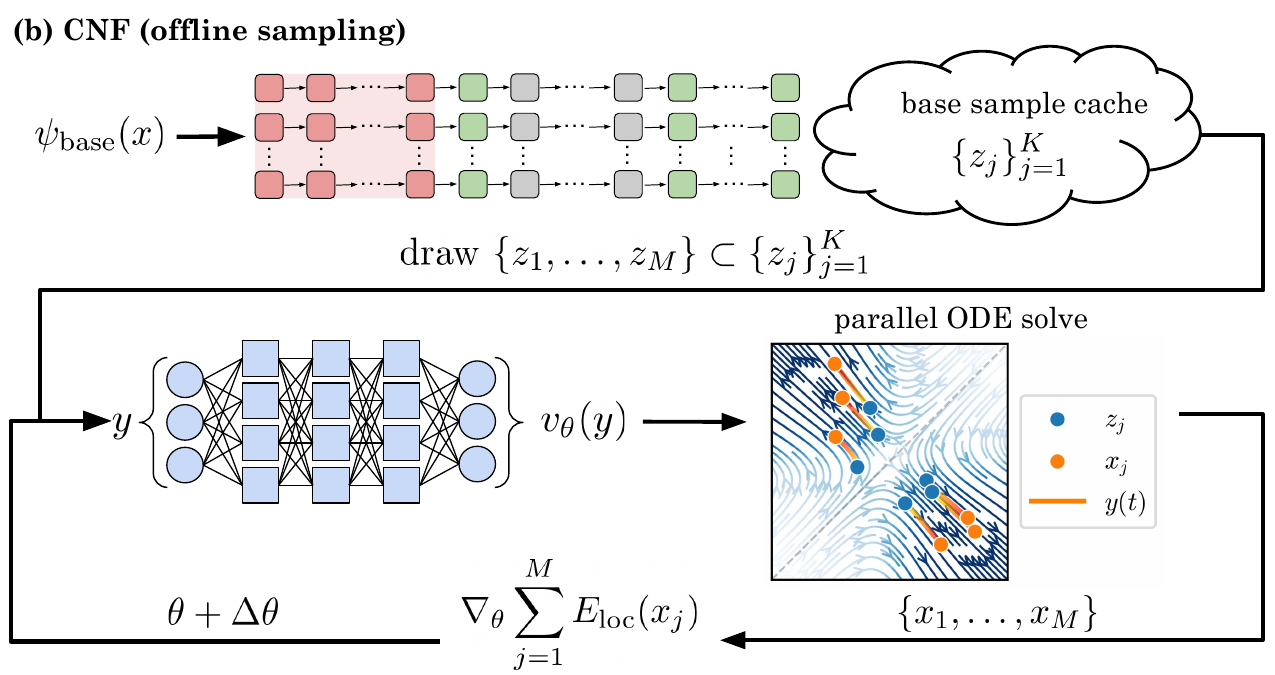}
    \caption{Comparison of the sampling procedures required for training NQS and CNF wavefunctions. NQS models (a) define the wavefunction $\psi_\theta$ directly using the output of a neural network. Sampling from the NQS Born distribution requires sequential MCMC sampling at each training iteration. Generating independent and unbiased samples $x_j$ with this method necessitates burn-in steps (red boxes) and thinning steps (gray boxes), both of which expend compute without directly producing usable samples. In contrast, the CNF approach (b) uses a neural network to define a vector field $v_\theta$ whose time-one map transforms a base wavefunction to a target wavefunction. Using $M$ base Born distribution samples $z_1, \dots, z_M$ as initial conditions, samples $x_1, \dots, x_M$ from the target Born distribution can be generated by integrating the ODE $dy / dt = v_\theta(y)$ from $t = 0$ to $1$. This procedure is embarrassingly parallel given a cache $\{z_j\}_{j=1}^K$ of pregenerated base distribution samples from which the $M$ samples are uniformly drawn.}
    \label{fig:sample_diagram}
\end{figure*}

\paragraph{Exact sampling.}
If $z \sim \Pi_{\mathrm{base}}$, then $x = f(z)$ is an exact sample from $\Pi_{\mathrm{target}}$, by the change of variables formula (see Figure~\ref{fig:flow_diagram}). In practice, $f$ is computed by a numerical ODE solver, so this exactness is up to numerical solver precision. Independent base samples yield independent target samples, since $f$ is applied to each independently. Sampling thus requires only forward ODE solves as shown in Figure~\ref{fig:sample_diagram}. See Section~\ref{sec:parallelization} for a discussion of the parallel training implications.

\paragraph{Fermionic structure.}
For fermionic systems, the flow ansatz must satisfy the antisymmetry constraint~\eqref{eq:antisymmetry}. It has been shown~\cite{theoretical_framework} that this reduces to two structural requirements: (i) $\psi_{\mathrm{base}}$ is antisymmetric, and (ii) $v$ is permutation-equivariant, i.e.\ $v(P_\sigma y) = P_\sigma v(y)$ for all $\sigma \in S_N$. Equivariance of $v$ implies equivariance of the flow map $f$, and the following result establishes that these conditions are sufficient.

\begin{proposition}[Ref.~\onlinecite{theoretical_framework}, Thm.~3]
  \label{prop:antisymmetry}
  If $\psi_{\mathrm{base}}$ is antisymmetric and $v$ is permutation-equivariant, then $\psi_{\mathrm{target}}$ defined by~\eqref{eq:flow_wf} is antisymmetric.
\end{proposition}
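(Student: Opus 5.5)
The argument proceeds in two stages: first show that the flow map $f$ and its inverse commute with every permutation $P_\sigma$, and then substitute this into the change-of-variables formula~\eqref{eq:flow_wf}.

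\emph{Equivariance of the flow.} Fix $\sigma \in S_N$ and let $y(t)$ be the solution of~\eqref{eq:forward_ode} with $y(0)=z$. Define $\tilde y(t) := P_\sigma y(t)$. Because $P_\sigma$ is a constant linear map and $v$ is equivariant,
\begin{equation*}
  \frac{\mathrm{d}\tilde y}{\mathrm{d}t} = P_\sigma v(y(t)) = v(P_\sigma y(t)) = v(\tilde y(t)),
  \qquad \tilde y(0) = P_\sigma z .
\end{equation*}
So $\tilde y$ solves the same ODE with initial condition $P_\sigma z$. Since $v$ is locally Lipschitz, solutions are unique, and therefore $\tilde y$ is exactly the trajectory starting at $P_\sigma z$. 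Evaluating at $t=1$ gives $f(P_\sigma z) = P_\sigma f(z)$. The same argument applied to the backward-time ODE, or simply inverting both sides, gives $f^{-1}(P_\sigma x) = P_\sigma f^{-1}(x)$.

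The one point needing care is existence on $[0,1]$ for both trajectories. This is automatic: $\tilde y$ is defined wherever $y$ is, so the domain of $f$ is permutation invariant. I expect this to be the only technical obstacle, and it is handled by the uniqueness theorem already assumed when $f$ is taken to be a diffeomorphism.

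\emph{Jacobian determinant.} Differentiating $f^{-1}\circ P_\sigma = P_\sigma \circ f^{-1}$ with the chain rule gives
\begin{equation*}
  \frac{\partial f^{-1}}{\partial x}(P_\sigma x)\,P_\sigma = P_\sigma\,\frac{\partial f^{-1}}{\partial x}(x).
\end{equation*}
Taking determinants, the factors $\det P_\sigma = \pm 1$ appear on both sides and cancel. Hence $\bigl|\det \partial f^{-1}/\partial x\bigr|$ is permutation invariant.

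\emph{Conclusion.} Substituting into~\eqref{eq:flow_wf}:
\begin{equation*}
  \psi_{\mathrm{target}}(P_\sigma x) = \psi_{\mathrm{base}}\bigl(P_\sigma f^{-1}(x)\bigr)\,\Bigl|\det\tfrac{\partial f^{-1}}{\partial x}(x)\Bigr|^{1/2} = \mathrm{sgn}(\sigma)\,\psi_{\mathrm{target}}(x),
\end{equation*}
where the last equality uses the antisymmetry of $\psi_{\mathrm{base}}$. This is exactly~\eqref{eq:antisymmetry} for $\psi_{\mathrm{target}}$.
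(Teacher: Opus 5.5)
Your proof is correct and takes essentially the same route as the paper. The paper defers the full argument to the cited reference and states only the key step, that equivariance of $v$ implies equivariance of the flow map $f$, which is your ODE-uniqueness argument; your invariance of $|\det \partial f^{-1}/\partial x|$ and substitution into~\eqref{eq:flow_wf} then complete it.
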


The Slater determinant~\eqref{eq:slater} built from the lowest $N$ eigenstates of the non-interacting single-particle Hamiltonian $-\frac{1}{2}\nabla^2 + V(x)$ is a natural choice for the base wavefunction, providing the exact ground state in the absence of particle interactions. A richer alternative which builds in the electron-electron cusp and short-range repulsion is the Jastrow-Slater wavefunction $\psi_{\mathrm{JS}}(x) = J(x)\,\psi_{\mathrm{Slater}}(x)$, discussed in Section~\ref{sec:cusps}.
Both bases use the same non-interacting Slater determinant; in practice one could substitute the Hartree-Fock determinant for a lower starting energy at little added cost. We adopt the non-interacting determinant in this work to benchmark the flow, so that in the Slater case the entire interaction energy is learned by the flow.
% This factor is pre-optimized and then held fixed during flow training, as described in Section~\ref{sec:cusps}.

Before training the flow, the vector field $v$ is initialized with near-zero weights, so that the flow transformation is approximately the identity at the start of training and $\psi_{\mathrm{target}} \approx \psi_{\mathrm{base}}$. This ensures stable optimization, as training begins from whatever accuracy the base wavefunction already achieves, and the flow learns to capture residual correlations incrementally. The choice of permutation-equivariant vector field is discussed in Section~\ref{sec:equivariant_architectures}.

\subsection{Physical Wavefunction Properties: Nodal Structure and Cusps}
\label{sec:cusps}

Two physical features of the exact wavefunction have important practical consequences for variational ansatz design.

\paragraph{Nodal structure.}
The antisymmetry constraint~\eqref{eq:antisymmetry} implies that $\psi$ vanishes on a codimension-one hypersurface in $\mathbb{R}^{ND}$ called the nodal surface. The nodal surface partitions configuration space into nodal domains; the exact ground state has the fewest possible nodal domains consistent with the boundary conditions~\cite{foulkes2001}. Errors in the nodal surface directly affect the variational energy. The fixed-node diffusion Monte Carlo (DMC) method, for instance, uses the nodal surface of a trial wavefunction to constrain the stochastic optimization, and the fixed-node error is thus limited by the accuracy of the trial nodes. In the CNF approach, the flow transformation $f$ maps the base nodal surface to a new nodal surface. 
% in particular, the learned wavefunction can have a different nodal surface than the base. 
Since $f$ is a diffeomorphism, $\det(\partial f^{-1}/\partial x)$ is nonzero everywhere, and the Jacobian determinant factor $|\det(\partial f^{-1}/\partial x)|^{1/2}$ in~\eqref{eq:log_flow_wf} is strictly positive: $\psi_\mathrm{target}$ vanishes exactly where $\psi_\mathrm{base} \circ f^{-1}$ vanishes. The nodal surface of $\psi_\mathrm{target}$ is thus the image of the base nodal surface under $f$. Since $f$ is a homeomorphism, it maps connected components bijectively, preserving the number of nodal domains. 
% The number of nodal domains is therefore fixed by the base, and the flow deforms the nodal surface without altering the topology.
The number of nodal domains is therefore an invariant of the base wavefunction: the flow can reshape the nodal surface but can neither create new nodal domains nor annihilate existing ones. The base wavefunction must therefore be chosen to match the nodal topology of the target ground state. See Section~\ref{sec:discussion} for further discussion.

\paragraph{Cusp conditions.}
The Coulomb singularity in the electron-electron interaction imposes regularity conditions, known as the Kato cusp conditions~\cite{foulkes2001}, on the wavefunction at points of electron coalescence. Specifically, the spherically averaged logarithmic derivative of $\psi$ with respect to the interparticle distance $|r_{ij}|$, where $r_{ij} = x_j - x_i$, must take a prescribed value as $|r_{ij}| \to 0$. Wavefunction ans\"atze that fail to capture this short-range behavior exhibit large local energy variance near coalescence, which slows convergence of the VMC estimator. Jastrow factors are the classical remedy: a multiplicative factor $J(x) > 0$ that is symmetric under particle exchange can be designed to satisfy the cusp conditions explicitly, reducing local energy variance and accelerating convergence. In this work, the Jastrow-Slater base $\psi_\mathrm{JS}(x) = J(x)\,\psi_\mathrm{Slater}(x)$ uses a Pad\'e-Jastrow factor
\begin{equation}\label{eq:jastrow}
  J(x) = \exp \bigg(\sum_{i<j} \frac{a\,|r_{ij}|}{1 + b\,|r_{ij}|}\bigg),
\end{equation}
with the cusp parameter $a$ fixed to satisfy the electron-electron cusp condition and the repulsion strength $b$ pre-optimized via a single-parameter VMC step before flow training begins. 

The diffeomorphism argument for nodal structure applies equally to cusps. Since $f$ is a diffeomorphism, the Jacobian determinant magnitude $|\det \partial f^{-1}/\partial x|$ is smooth and nonzero, and the non-smooth points of $\psi_\mathrm{target}$ are exactly the images under $f$ of the non-smooth points of $\psi_\mathrm{base}$; the flow can relocate cusps but can neither create nor destroy them. A flow wavefunction built using a Slater determinant base with smooth single-particle orbitals can therefore only approximate the correct short-range behavior near coalescence. 

The Jastrow-Slater base places a true cusp at each point of the coalescence set $\{r_{ij} = 0\}$. The flow preserves the cusp structure, and can additionally move and deform cusps; the target wavefunction can thus in principle be trained to achieve the correct cusp behavior. 
% The flow may, however, displace cusps away from the physical coalescence points or alter the logarithmic derivative at coalescence away from the analytically prescribed value; either violation introduces error and increases local energy variance. Since any such displacement or alteration raises the energy, the variational objective penalizes these errors but does not prevent them exactly. 

%%====================================================================
\section{Related Work}
\label{sec:related}

Several recent works have explored normalizing flows as wavefunction ans\"atze for VMC. Ref.~\onlinecite{lawrence} shares our motivation of exact, parallel sampling via the change-of-variables formula, demonstrating the approach in three dimensions for real-time evolution of one particle and ground-state estimation for three particles; however, their ground-state demonstrations use distinguishable particles, avoiding the issue of particle exchange symmetry entirely, with fermionic antisymmetry left as future work. The closest structural parallel to our work is Ref.~\onlinecite{finite_temp}, which applies a CNF with a Slater determinant base to finite-temperature fermionic systems, using a simple backflow-form vector field defined using scalar functions and optimized by Adam; results are presented for up to 10 electrons in 2D. Ref.~\onlinecite{m_star} extends this to 57 electrons in 2D, but relies on an architecture without a guaranteed inverse, requiring MCMC in place of exact generative sampling. Ref.~\onlinecite{waveflow} enforces antisymmetry via boundary conditions on a fundamental domain rather than through an antisymmetric base, with a proof of concept for a helium-like system of two particles in one dimension. Ref.~\onlinecite{ngairangbam2025improvedgroundstateestimation} uses a normalizing flow as a sampler for an NQS ansatz rather than as the wavefunction itself, applied to lattice quantum field theories. An alternative MCMC-free approach not based on flows is taken in Ref.~\onlinecite{autoregressive_jastrow}, wherein an autoregressive neural Jastrow factor is used alongside a Slater determinant for exact uncorrelated sampling of second-quantized wavefunctions, targeting lattice systems. Ref.~\onlinecite{zhao2021overcoming} demonstrates GPU-scalable VMC using autoregressive models with exact sampling for discrete combinatorial problems.

%%====================================================================
\section{Equivariant Vector Field Architectures}
\label{sec:equivariant_architectures}
%%====================================================================

Given an antisymmetric base wavefunction, Proposition~\ref{prop:antisymmetry} reduces the construction of an antisymmetric CNF wavefunction to the choice of a permutation-equivariant vector field. The canonical architecture for this purpose is Deep Sets (DS)~\cite{deep_sets, bilos}, defined component-wise as
\begin{equation}
  v_i(y) = g(y_i)
           + \sum_{j \neq i}^N h(y_j),
  \label{eq:deep_sets}
\end{equation}
where $v_i(y) \in \R^D$ and $g, h\colon\mathbb{R}^{D}\to\mathbb{R}^D$ are MLPs. Note that while $v\colon\R^{ND}\to\R^{ND}$, the input dimension of each network depends only on $D$, not on $N$. Unlike a network that operates on the full $ND$-dimensional configuration, neither $g$ nor $h$ compresses a growing input; increasing $N$ increases the number of network evaluations but introduces no information bottleneck. Network width and depth may still be scaled with $N$ as desired, but this is a modeling choice rather than an architectural necessity. Figure~\ref{fig:flow_diagram} shows a trained CNF with a DS vector field for two particles in one dimension.

In DS, the Jacobian trace~\eqref{eq:inst_logdet} reduces to single-particle contributions: since $h(y_j)$ for $j \neq i$ does not depend on $y_i$, only $g$ contributes to the diagonal block $\partial v_i / \partial y_i$, and
\begin{equation}
  \tr\frac{\partial v}{\partial y}
      = \sum_{i=1}^N \tr\frac{\partial}{\partial y_i} g(y_i),
  \label{eq:deep_sets_trace}
\end{equation}
where each term requires $D$ Jacobian-vector products (JVPs) through $g$ at a single-particle input~\cite{bilos}.

DS aggregates interparticle information through a single symmetric sum, which is computationally cheap but gives the vector field no direct access to interparticle displacements. For systems with distance-dependent pairwise interactions such as~\eqref{eq:hamiltonian}, this limits expressivity: the interaction energy depends explicitly on $|x_i - x_j|$, which is not recoverable from the aggregate alone. The three architectures introduced below address this directly.

\subsection{Pairwise Deep Sets}
\label{sec:pair_deep_sets}

\emph{Pairwise Deep Sets} (PDS) extends DS with a dedicated pairwise stream using a third MLP $p\colon \R^{D} \to \R^D$:
\begin{equation}
  v_i(y) = g(y_i)
           + \sum_{j \neq i}^N h(y_j) + \sum_{j \neq i}^N p(r_{ij}),
  \label{eq:deep_sets_pairs}
\end{equation}
where $r_{ij} = y_j - y_i \in \R^D$ is the displacement from particle $i$ to $j$, so each forward pass requires $\mathcal{O}(N^2)$ MLP evaluations, compared to the $\mathcal{O}(N)$ required for DS. The displacement $r_{ij}$ gives the network explicit access to each interparticle vector. The pairwise sum is permutation-equivariant, so~\eqref{eq:deep_sets_pairs} remains equivariant overall. In practice, we find that PDS constitutes a significant improvement for fermionic ground state estimation at a modest increase in cost; details are discussed in Section~\ref{sec:experiments}.

The Jacobian trace receives contributions from both $g$ and $p$:
\begin{equation}
  \tr\frac{\partial v}{\partial y}
      = \sum_{i=1}^N \tr\frac{\partial}{\partial y_i}
         g(y_i) + \sum_{i=1}^N \sum_{j \neq i}^N
           \tr\frac{\partial}{\partial y_i}
           p(r_{ij}).
  \label{eq:pairs_trace}
\end{equation}
The first sum requires $D$ JVPs per particle; the second requires $D$ JVPs per ordered pair $(i,j)$ with $i \neq j$.

\subsection{FermiNet Vector Fields}
\label{sec:ferminet_vf}

While PDS provides direct access to pairwise geometry within a single aggregation step, a more expressive approach uses multi-layer cross-stream communication to refine particle representations iteratively. \emph{FermiNet Vector Fields} (FVF) adapt the interaction layers of FermiNet~\cite{ferminet} as a permutation-equivariant vector field. The architecture uses the same one- and two-electron feature streams and interaction layers as FermiNet, but replaces the Slater determinant readout with a linear projection.
% \begin{equation}
%   v_i(x) = W_{\mathrm{out}}\,h_i^{(L)},
%   \label{eq:fvf}
% \end{equation}
% where $h_i^{(L)}\in\R^{H_1}$ are the final one-electron features and $W_{\mathrm{out}}\colon\R^{H_1}\to\R^D$.
The multi-layer cross-stream interactions give FVF access to richer interparticle structure than PDS: at each layer, the one-electron features of particle $i$ are updated using aggregated two-electron features from all pairs involving $i$, enabling information to propagate across multiple particles over successive layers. The aggregation at each layer is also done in latent space, which is strictly more expressive than the PDS approach of aggregating the outputs in $\mathbb{R}^D$. This comes at increased computational cost relative to PDS. A more detailed description of both FermiNet and FVF can be found in Appendix~\ref{app:ferminet}.

For the Jacobian trace computation, we exploit sparsity in the two-electron stream: perturbing $y_k$ affects only the $2(N-1)$ pairs $(k,j)$ and $(i,k)$ out of $N(N-1)$ total, and this sparsity is preserved through all layers. This reduces the Jacobian trace cost from the naive $\mathcal{O}(N^3 D)$ to $\mathcal{O}(N^2 D)$.

More broadly, any permutation-equivariant NQS architecture can in principle be adapted as a CNF vector field in this way, making CNF versions of other ans\"atze a natural direction for future work.

\subsection{Pairwise Deep Sets Gradient}
\label{sec:pair_deep_sets_gradient}

Each of the architectures introduced above constructs $v$ directly from neural network outputs. An alternative approach is to instead parameterize a scalar potential $\varphi\colon\R^{ND}\to\R$ and set $v = \nabla_y\varphi$, restricting the vector field to be conservative. If $\varphi$ is permutation-invariant, then $v = \nabla_y\varphi$ is permutation-equivariant. We introduce \emph{Pairwise Deep Sets Gradient} (PDSG), which takes this approach, building $\varphi$ from single-particle and pairwise permutation-invariant pooling:
\begin{align}
  \varphi(y) =& a_q\Bigl(\frac{1}{N}\sum_{i=1}^N q(y_i)\Bigr) \notag\\ &+ a_p\Bigl(\frac{1}{N(N-1)}\sum_{i=1}^N \sum_{j \neq i}^N p(r_{ij})\Bigr),
  \label{eq:pdsg_phi}
\end{align}
where $q\colon\R^{D}\to\R^{H_q}$ and $p\colon\R^{D}\to\R^{H_p}$ are embedding MLPs and $a_q\colon\R^{H_q}\to\R$ and $a_p\colon\R^{H_p}\to\R$ are scalar readout MLPs. Mean pooling is used in place of summation so that the aggregate input to each readout MLP has scale independent of $N$. The vector field is then
\begin{equation}
  v(y) = \nabla_y \varphi(y).
  \label{eq:pdsg}
\end{equation}

The gradient structure yields a key simplification for the Jacobian trace. Since $v = \nabla_y\varphi$, the Jacobian $\partial v/\partial y$ is the Hessian $\nabla^2_y\varphi$; the divergence of a gradient vector field is thus the Laplacian of its generating potential:
\begin{equation}
  \tr\frac{\partial v}{\partial y} = \nabla_y \cdot v(y) = \Delta_y\,\varphi(y).
  \label{eq:pdsg_trace}
\end{equation}
For DS, PDS, and FVF, computing $v$ and $\tr(\partial v/\partial y)$ requires two separate procedures: a forward pass for the vector field and $D$ JVPs per particle (or per pair) for the trace. For PDSG, both quantities are obtained simultaneously from a forward Laplacian~\cite{lapnet} pass on the scalar $\varphi$. The forward Laplacian method propagates values, Jacobians, and Laplacians through a network in a single forward sweep by augmenting each intermediate activation with its gradient and Laplacian with respect to the input. Applied to $\varphi\colon\R^{ND}\to\R$, one forward Laplacian pass yields $\nabla_y\varphi(y) = v(y)$ and $\Delta_y\varphi(y) = \tr(\partial v/\partial y)$ simultaneously. Table~\ref{tab:complexity} summarizes the per-ODE-step cost of each architecture.

The gradient structure of PDSG is motivated by a connection to optimal transport. OT-Flow~\cite{onken2021otflowfastaccuratecontinuous} augments the CNF training objective with a transport-cost regularization; by the Pontryagin maximum principle, the optimal time-dependent velocity field under this regularization is the gradient of a scalar potential. Our PDSG implementation is time independent, and we do not include the transport-cost regularization in our training runs, so the formal optimality result does not apply. The connection nonetheless serves as a heuristic motivation for why restricting to gradient vector fields may be beneficial. We find empirically that PDSG outperforms PDS despite the more constrained parameterization (Section~\ref{sec:experiments}).

\begin{table}[h]
  \centering
  \caption{Per-ODE-step computational cost of each vector field architecture. Forward evaluations count MLP calls; Jacobian trace reports the number of JVPs required. For PDSG, a forward Laplacian pass on $\varphi$ internally makes $\mathcal{O}(N^2)$ pairwise MLP evaluations and yields $v$ and $\tr(\partial v/\partial y)$ simultaneously, with no separate JVPs.}
  \label{tab:complexity}
  \setlength{\tabcolsep}{6pt}
  \begin{tabular}{lll}
    \toprule
    Architecture & Forward evaluations & Jacobian trace \\
    \midrule
    DS   & $\mathcal{O}(N)$            & $\mathcal{O}(ND)$ JVPs \\
    PDS  & $\mathcal{O}(N^2)$          & $\mathcal{O}(N^2 D)$ JVPs \\
    FVF  & $\mathcal{O}(N^2)$ & $\mathcal{O}(N^2 D)$ JVPs \\
    PDSG & $\mathcal{O}(N^2)$          & Fwd. Lap. on $\varphi$ \\
    \bottomrule
  \end{tabular}
\end{table}

%%====================================================================
\section{Efficient Laplacian Computation via Augmented Dynamics}
\label{sec:efficient_laplacian}
%%====================================================================

Training the flow model \eqref{eq:flow_wf} to minimize \eqref{eq:eloc_mc_estimator} requires computing local kinetic energies and parameter gradients $\nabla_\theta \Eloc(x)$. Both tasks require extracting derivative information from an ODE solve, but the two cases differ fundamentally in structure and call for different approaches. Parameter gradients are first-order derivatives of a scalar with respect to network parameters; reverse-mode automatic differentiation through the ODE solve computes all of them simultaneously in a single pass. This is handled via Diffrax~\cite{diffrax} using recursive checkpointing, which stores $\mathcal{O}(\log n)$ states during the ODE solve and recomputes intermediate states as needed during backpropagation, keeping memory sub-linear in the number of solver steps $n$.

The kinetic energy presents a more difficult challenge. For a real wavefunction $\psi$, the local kinetic energy can be written as
\begin{equation}
  T_{\mathrm{loc}}(x) = -\tfrac{1}{2}\bigl[\Delta_x\log|\psi(x)| + |\nabla_x\log|\psi(x)||^2\bigr].
  \label{eq:kinetic_local}
\end{equation} 
Evaluating~\eqref{eq:kinetic_local} for the log-space flow wavefunction \eqref{eq:log_flow_wf} requires second-order derivatives of $\log|\psi_{\mathrm{target}}|$ with respect to particle positions, which in turn require second-order derivatives of the flow transformation $f$. Unlike parameter gradients, the Laplacian of $\log|\psi_{\mathrm{target}}|$ cannot be extracted in a single pass. The standard forward-over-reverse approach obtains $\nabla_x \log|\psi_{\mathrm{target}}|$ via reverse-mode automatic differentiation through the ODE solve~\eqref{eq:forward_ode}, then applies $ND$ JVPs to extract the Laplacian. This requires materializing $ND$ reverse-mode computation graphs simultaneously. The memory requirements scale with both $N$ and network size, quickly becoming prohibitive for system sizes of interest. Computing the JVPs sequentially rather than in parallel reduces memory at the cost of proportionally longer runtime.

One possible solution is to replace~\eqref{eq:kinetic_local} with an alternative estimator derived through integration by parts~\cite{lawrence}. For real wavefunctions,
\begin{equation}
  \langle T_\mathrm{loc} \rangle = \tfrac{1}{2}\langle |\nabla_x \log|\psi||^2 \rangle,
\end{equation}
which requires only first-order derivatives and eliminates the Laplacian computation entirely. However, in addition to producing energy estimates with much higher variance, this estimator forfeits the zero-variance property, which is central to both optimization stability and convergence monitoring in VMC. We elect to retain the standard local energy estimator~\eqref{eq:kinetic_local} and instead introduce a method which allows the needed Laplacians to be computed more efficiently.

We propose an approach in which the required derivative quantities are co-evolved through the ODE as augmented state variables alongside the particle positions. This \emph{augmented dynamics} formulation eliminates differentiation through the ODE solve when evaluating the kinetic energy, yielding significant savings in both time and memory (see Figure~\ref{fig:aug_benchmark}).

From~\eqref{eq:log_flow_wf}, the gradient and Laplacian of $\log|\psi_{\mathrm{target}}|$ decompose in terms of six quantities associated with the inverse map and its log-determinant:
\begin{equation}
\renewcommand{\arraystretch}{1.4}
\begin{array}{r@{{}:={}}l@{\qquad}r@{{}:={}}l@{\qquad}r@{{}:={}}l}
  z      & f^{-1}(x),      & J      & \tfrac{\partial z}{\partial x}, & L    & \Delta_x z,      \\
  \lambda & \log|\det J|, & \gamma & \nabla_x\lambda,                & \ell & \Delta_x\lambda.
\end{array}
\label{eq:laplacian_decomp_pieces}
\end{equation}
Writing $s := \log|\psi_{\mathrm{base}}|$, it holds that
\begin{align}
\begin{split}
  \nabla_x \log|\psi_{\mathrm{target}}|
      &= J^\top \nabla_z s + \tfrac{1}{2}\,\gamma, \\
  \Delta_x \log|\psi_{\mathrm{target}}|
      &= \tr\left(J^\top H_s J\right) + \nabla_z s \cdot L + \tfrac{1}{2}\,\ell,
  \label{eq:laplacian_decomp}
\end{split}
\end{align}
where $H_s$ is the Hessian of $s$ at $z$; the first line follows from the standard gradient chain rule and the second from the Laplacian chain rule stated in Appendix~\ref{app:laplacian_chain_rule}. The quantities~\eqref{eq:laplacian_decomp_pieces}, together with $\nabla_z s$ and $H_s$, suffice to evaluate the local kinetic energy~\eqref{eq:kinetic_local} via~\eqref{eq:laplacian_decomp}.

The following proposition (proven in Appendix~\ref{app:augmented_dynamics}) gives governing equations for the six quantities \eqref{eq:laplacian_decomp_pieces} as augmented ODE state variables.

\begin{proposition}
\label{prop:augmented_dynamics}
Let $J_v := \partial v / \partial y_1$, $H_{v_k}$ the Hessian of the $k$-th component of $v$ with respect to $y_1$, and $H_{\tr J_v}$ the Hessian of $\tr J_v$ with respect to $y_1$. The following initial value problem, integrated from $t = 1$ to $t = 0$, yields the six quantities of~\eqref{eq:laplacian_decomp_pieces} as $y_1(0) = z$, $y_2(0) = J$, $y_3(0) = L$, $y_4(0) = \lambda$, $y_5(0) = \gamma$, $y_6(0) = \ell$:
\begin{alignat}{2}
  \dot{y}_1 &= v(y_1),                                                                       &\quad y_1(1) &= x,           \label{eq:aug_dy} \\
  \dot{y}_2 &= J_v\, y_2,                                                                    &\quad y_2(1) &= I,           \label{eq:aug_dJ} \\
  \dot{y}_3 &= J_v\, y_3 + \bigl[\tr(y_2^\top H_{v_k} y_2)\bigr]_{k=1}^{ND},               &\quad y_3(1) &= 0,           \label{eq:aug_dL} \\
  \dot{y}_4 &= \tr J_v,                                                                      &\quad y_4(1) &= 0,           \label{eq:aug_dlambda} \\
  \dot{y}_5 &= y_2^\top \nabla_{y_1}(\tr J_v),                                              &\quad y_5(1) &= 0,           \label{eq:aug_dgamma} \\
  \dot{y}_6 &= \tr\left(y_2^\top H_{\tr J_v} y_2\right) + \nabla_{y_1}(\tr J_v)\cdot y_3, &\quad y_6(1) &= 0.           \label{eq:aug_dl}
\end{alignat}
Here $y_1, y_3, y_5 \in \R^{ND}$, $y_2 \in \R^{ND\times ND}$, and $y_4, y_6 \in \R$. These solutions at $t=0$ can be assembled to evaluate the local kinetic energy~\eqref{eq:kinetic_local} via~\eqref{eq:laplacian_decomp}; $y_4(0)$ additionally provides the log-determinant for wavefunction evaluation via~\eqref{eq:log_flow_wf}.
\end{proposition}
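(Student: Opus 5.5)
Let $\Phi_t$ denote the flow of $\dot y = v(y)$ and, for fixed $x$, set $y_1(t) := \Phi_{t-1}(x)$. Then $y_1(1)=x$, $y_1$ solves \eqref{eq:aug_dy}, and $y_1(0)=\Phi_{-1}(x)=f^{-1}(x)=z$. Every other state variable will be defined as an $x$-derivative of a quantity built from $y_1(t)$. The plan is then to differentiate the defining ODE in $x$ and swap the $t$- and $x$-derivatives, which is legitimate because $v$ is $C^\infty$ ($\tanh$ MLPs), so $\Phi$ is smooth in $(t,x)$. Uniqueness of solutions of the (linear-in-$y_2,\dots,y_6$) augmented system then identifies each $y_k(0)$ with the corresponding quantity in \eqref{eq:laplacian_decomp_pieces}.

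First set $y_2(t):=\partial y_1(t)/\partial x$. Differentiating $\dot y_1=v(y_1)$ in $x$ gives the variational equation \eqref{eq:aug_dJ}, with $y_2(1)=I$ and $y_2(0)=J$. Next set $y_3(t):=\Delta_x y_1(t)$, i.e.\ the componentwise sum $\sum_m \partial^2 y_1/\partial x_m^2$. Differentiating $\dot y_1^k = v_k(y_1)$ twice in $x_m$ and summing over $m$ gives
\[
\dot y_3^k=\sum_m \partial_m y_1^\top H_{v_k}\,\partial_m y_1+(J_v)_{k\cdot}\,\Delta_x y_1 ,
\]
and the first term equals $\tr(y_2^\top H_{v_k}y_2)$. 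This is the Laplacian chain rule of Appendix~\ref{app:laplacian_chain_rule} applied to $v_k\circ y_1$, and it yields \eqref{eq:aug_dL} with $y_3(1)=0$ since $y_1(1)=x$ is linear in $x$.

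For the log-determinant, set $y_4(t):=\log|\det y_2(t)|$. Jacobi's formula together with \eqref{eq:aug_dJ} gives $\dot y_4=\tr(y_2^{-1}J_v y_2)=\tr J_v(y_1)$, i.e.\ \eqref{eq:aug_dlambda}; here $\det y_2\neq0$ because $\Phi$ is a diffeomorphism. We have $y_4(1)=0$ and $y_4(0)=\log|\det J|=\lambda$, which recovers \eqref{eq:inst_logdet}. Because the right-hand side $\tr J_v(y_1(t))$ is an explicit function of $y_1(t)$, we can write $y_4(0)=-\int_0^1 \tr J_v(y_1(t))\,dt$ and differentiate under the integral. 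Setting $y_5:=\nabla_x y_4$ and $y_6:=\Delta_x y_4$, the gradient chain rule gives $\dot y_5=y_2^\top\nabla_{y_1}\tr J_v$. The Laplacian chain rule applied to $g:=\tr J_v$ composed with $y_1$ gives $\dot y_6=\tr(y_2^\top H_{g}y_2)+\nabla g\cdot y_3$. Both have zero terminal data at $t=1$, and at $t=0$ they equal $\gamma$ and $\ell$ respectively.

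The main obstacle is rigor rather than computation. One has to justify commuting $\partial_x$ with $\partial_t$ (or with the time integral), which requires $C^3$ smoothness of $v$ and of $\Phi$ in $x$; this follows from standard smooth dependence on initial conditions. One also needs the solutions to exist on all of $[0,1]$, which is ensured by the flow map being defined there. The bookkeeping of index conventions in the Laplacian chain rule, using $\sum_m \partial_m y_1^\top H\,\partial_m y_1 = \tr(y_2^\top H y_2)$, should be spelled out once and then reused for both $y_3$ and $y_6$. The final assembly into $T_{\mathrm{loc}}$ is then immediate from \eqref{eq:laplacian_decomp}.
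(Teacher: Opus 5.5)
Your proposal is correct and follows the paper's proof essentially step by step. Each $y_k$ is defined as the corresponding $x$-derivative of the backward trajectory, $\partial_t$ is commuted with $\nabla_x$ or $\Delta_x$, and the Laplacian chain rule does the work; your use of Jacobi's formula simply rederives the instantaneous change-of-variables formula that the paper cites for $y_4$. One small correction to your rigor remarks: the augmented system is not linear in $y_2,\dots,y_6$, since the $y_3$ and $y_6$ equations contain terms quadratic in $y_2$. It is, however, triangular with a smooth right-hand side, so existence on $[0,1]$ and uniqueness still hold and your identification argument goes through.
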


At each solver step, the right-hand sides of~\eqref{eq:aug_dy}--\eqref{eq:aug_dl} are evaluated by applying the forward Laplacian method~\cite{lapnet} to the pair $(v,\, \tr J_v)$. This propagates values, gradients, and second-order terms for both functions simultaneously in a single forward pass, yielding $v$, $J_v$, and $H_{v_k}$ from the first component and $\tr J_v$, $\nabla_{y_1}(\tr J_v)$, and $H_{\tr J_v}$ from the second, with no differentiation through the ODE solve.
% The forward Laplacian operates on individual network evaluations, not on compositions of ODE steps; applying it through the full ODE solve would require composing second-order information across many solver steps, which it is not designed to do. In the augmented dynamics formulation it is applied once per step to evaluate $v$ and its derivatives at the current state, which is exactly its intended use.
The dominant memory cost of the augmented state comes from storing the Jacobian $y_2 \in \R^{ND \times ND}$~\eqref{eq:aug_dJ} at each checkpoint; because the kinetic energy assembles algebraically from the augmented ODE solutions via~\eqref{eq:laplacian_decomp} and~\eqref{eq:kinetic_local}, no storage of network activations is needed for the kinetic energy computation.

%%====================================================================
\section{Offline Sampling and Parallelization}
\label{sec:parallelization}
%%====================================================================

\begin{figure}
  \centering
  \includegraphics[width=0.9\colfigwidth]{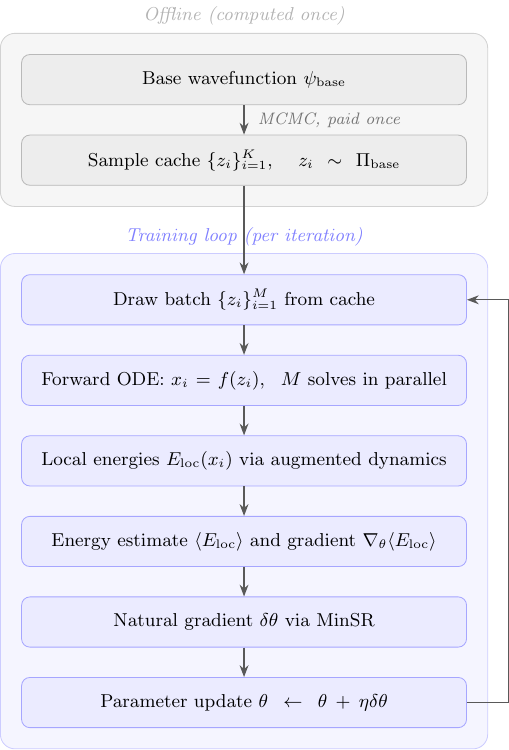}
  \caption{Training pipeline for the CNF-based VMC framework. The offline phase (gray) runs MCMC once on the fixed base distribution $\Pi_\mathrm{base}$ to populate a sample cache of $K$ configurations. Each training iteration (blue) draws a random batch from the cache, propagates each sample through the forward ODE to obtain exact target samples, computes local energies via augmented dynamics, and applies a MinSR natural gradient update to the parameters $\theta$. No MCMC is performed at training time.}
  \label{fig:pipeline}
\end{figure}

As discussed in Section~\ref{sec:mcmc}, NQS models must perform online MCMC sampling at every training iteration. Even when running multiple independent chains in parallel and warm-starting each new chain from the previous iteration's state, burn-in costs must be paid at every training step across all chains. In addition, the overhead required to adaptively adjust step size and thinning parameters as the model trains compounds across training iterations. The CNF approach sidesteps this bottleneck by relocating MCMC sampling entirely offline (see Figure~\ref{fig:sample_diagram}).

Prior to training, a cache of $K$ base samples $\{z_i\}_{i=1}^K$ is generated by running MCMC on the base distribution $\Pi_{\mathrm{base}}$. This is a task MCMC is genuinely well-suited for: the burn-in cost is paid once and amortized over the full cache, and the chains can be run for long enough to ensure thorough mixing and decorrelation. Because the base wavefunction is fixed throughout training, the sample cache remains valid for the entire training run. Furthermore, the sample cache for a given base wavefunction needs to be generated only once, and can then be reused indefinitely across multiple flow models and training runs. By saving the Markov chain state and MCMC hyperparameters along with the samples, additional samples can be added to the cache as needed using already burned-in chains.

At each training iteration, a batch $\{z_i\}_{i=1}^M$ is drawn from the cache. Each base sample is propagated through the forward ODE~\eqref{eq:forward_ode} to yield $x_i = f(z_i) \sim \Pi_{\mathrm{target}}$. Since each forward ODE solve depends only on $z_i$ and the current parameters $\theta$, the $M$ solves are independent and can be distributed across GPUs with no inter-sample communication beyond the natural gradient computation and parameter update~\eqref{eq:nat_grad_update}. As the flow parameters are updated, the same cached base samples are mapped to different configurations by the updated flow transformation $f$, so the sampling cost does not grow with training.

This offline, embarrassingly parallel structure is the key structural advantage of flow-based VMC over MCMC-dependent NQS approaches, and it underlies the multi-GPU scaling demonstrated in Section~\ref{sec:experiments}. The full training pipeline is summarized in Figure~\ref{fig:pipeline}.

%%====================================================================
\section{Experimental Results}
\label{sec:experiments}
%%====================================================================

\subsection{Experimental Setup}

\paragraph{Physical system.}
All experiments consider $N$ spinless electrons confined in a three-dimensional harmonic trap with Hamiltonian
\begin{equation}
  \hat{H} = -\frac{1}{2}\sum_{i=1}^N \nabla_i^2
            + \frac{1}{2}\omega^2 \sum_{i=1}^N |x_i|^2
            + k \sum_{i < j} \frac{1}{|x_i - x_j|},
  \label{eq:harmonic_hamiltonian}
\end{equation}
with $\omega = 1$ and $k = 1$ in Hartree atomic units. For a sense of scale, the $N=2$ Hartree-Fock wavefunction has a root-mean-square radius of about 1.5 Bohr radii, comparable to the spatial extent of valence electrons in real atoms. The harmonic trap is a canonical model for confined quantum systems such as ultracold trapped atoms and quantum dots. The following four properties also make it well-suited as a benchmark for first-quantized VMC methods. First, its non-interacting limit is exactly solvable, giving a well-motivated antisymmetric base wavefunction that the flow refines. Second, the trap frequency $\omega$, interaction strength $k$, and particle number $N$ tune the correlation strength and system size, yielding a controllable family of test problems. Third, the configuration space is the unbounded, smooth domain $\mathbb{R}^{3N}$, with the confinement isolating electron-electron correlation without electron-nuclear singularities or boundary and periodicity conditions. Fourth, the system admits trustworthy reference energies for validation: a numerically exact solution can be obtained for the two-electron case, and for larger $N$ we can obtain CISD reference energies using a Gaussian basis that spans the harmonic-oscillator eigenstates. All reported CISD reference energies use a shell cutoff of $n_{\max} = 10$ ($N_{\mathrm{orb}} = 286$ orbitals; see Appendix~\ref{app:cisd}).

\paragraph{Base wavefunctions.}
The Slater base uses as orbitals the lowest $N$ eigenstates of the single-particle harmonic oscillator,
\begin{equation}
  \phi_{n}(x) \propto \prod_{d=1}^{3} H_{n_d}(\sqrt{\omega}\,x_d)\,e^{-\omega x_d^2/2},
  \label{eq:ho_orbitals}
\end{equation}
where $n \in \mathbb{N}_0^3$ indexes quantum numbers ordered by $|n|$ and $H_{n_d}$ is the $n_d$-th Hermite polynomial. This Slater determinant is the exact non-interacting ground state. The Jastrow-Slater base augments this with the Pad\'e-Jastrow factor~\eqref{eq:jastrow}, with $a = k/4$ and $b$ pre-optimized by a single-parameter VMC step, after which the Jastrow-Slater base is frozen and samples are drawn and cached.

\paragraph{CNF architectures.}
Four permutation-equivariant vector fields are evaluated: DS, PDS, FVF, and PDSG, each defining a CNF wavefunction. Architecture, solver, and training hyperparameters, along with parameter counts, are given in Appendix~\ref{app:hyperparameters}.

\paragraph{Training.}
All models were trained using a batch size of 4096. Base samples were drawn from a pregenerated cache of size $2^{19}$ and distributed evenly across GPUs.
Optimization was carried out using MinSR~\cite{minsr}. Convergence was declared once both the local energy mean and variance stopped decreasing; the reported energy at each system size was taken at the lowest-variance checkpoint in the final converged window. The full convergence criterion is detailed in Appendix~\ref{app:hyperparameters}.

\paragraph{Compute.}
All experiments except the scaling results (Section~\ref{sec:scaling}) were executed via Modal on NVIDIA A100 GPUs, using a single GPU for the augmented dynamics benchmark (Section~\ref{sec:aug_benchmark}) and up to 8 GPUs for the training experiments. The scaling results were run on the Perlmutter machine at NERSC.

\subsection{Augmented Dynamics Benchmark}
\label{sec:aug_benchmark}

\begin{figure}
  \centering
  \includegraphics[width=\colfigwidth]{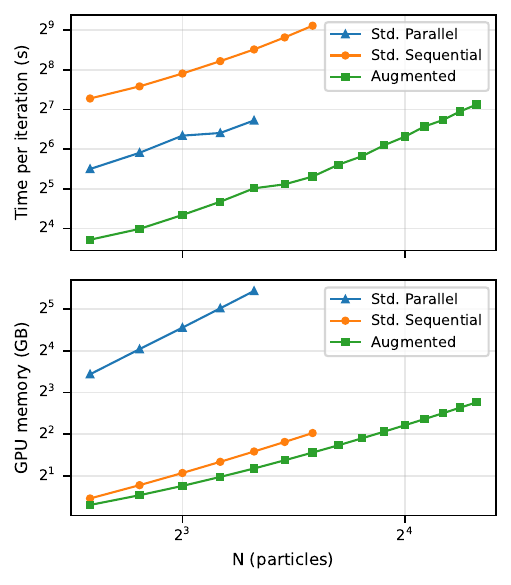}
  \caption{Per-iteration wall-clock time (top) and peak GPU memory usage (bottom) as a function of particle number $N$, comparing the augmented dynamics approach introduced in Section~\ref{sec:efficient_laplacian} with standard forward-over-reverse automatic differentiation through the ODE solve. Results are shown for the standard approach in both parallel and sequential modes. The ODE solve was fixed to use 10 steps, and runs were carried out on a single GPU using a PDS vector field over a Slater base. The augmented approach is faster and more memory-efficient than both standard variants across all system sizes tested.}
  \label{fig:aug_benchmark}
\end{figure}

Figure~\ref{fig:aug_benchmark} compares per-iteration cost across three kinetic energy evaluation strategies: the augmented dynamics formulation of Section~\ref{sec:efficient_laplacian}, and standard forward-over-reverse automatic differentiation in its parallel and sequential JVP modes. The ODE solve was fixed to 10 steps so that the trend in $N$ reflects per-step cost rather than differences in adaptive step counts. Augmented dynamics is faster and more memory-efficient than standard forward-over-reverse differentiation in both its parallel and sequential variants, across all system sizes tested. The improvement is most pronounced where each variant incurs its dominant cost. The memory advantage is largest over the parallel variant, which materializes $ND$ adjoint computation graphs simultaneously, while the runtime advantage is largest over the sequential variant, which replays the trajectory differentiation $ND$ times. Augmented dynamics avoids both costs by assembling the Laplacian algebraically from the augmented ODE state, storing only $\mathcal{O}(N^2 D^2)$ plain arrays per checkpoint.

\subsection{Comparison of Vector Field Architectures}

\begin{figure*}
  \centering
  \subfloat[Convergence of each model by number of iterations (batches).\label{fig:convergence_iter}]{%
    \includegraphics[width=0.5\textwidth]{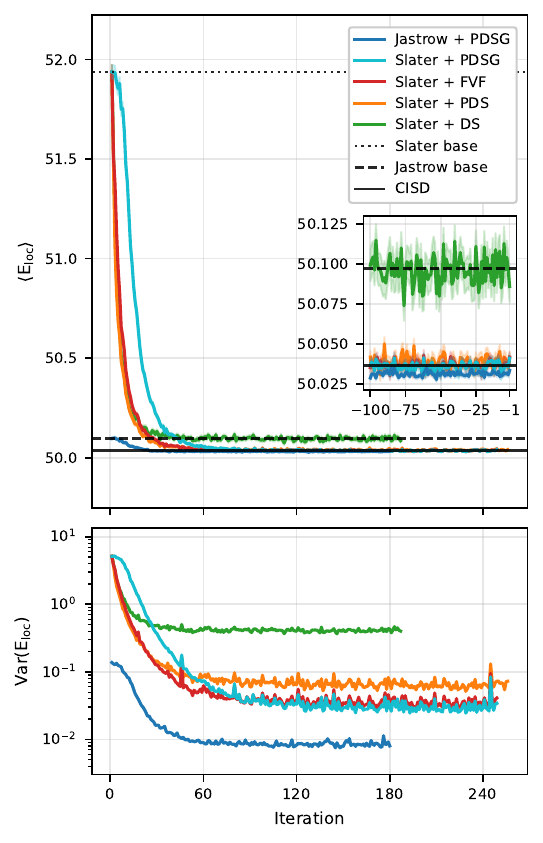}}%
  \hfill
  \subfloat[Convergence of each model by wall-clock time in seconds.\label{fig:convergence_wall_clock}]{%
    \includegraphics[width=0.5\textwidth]{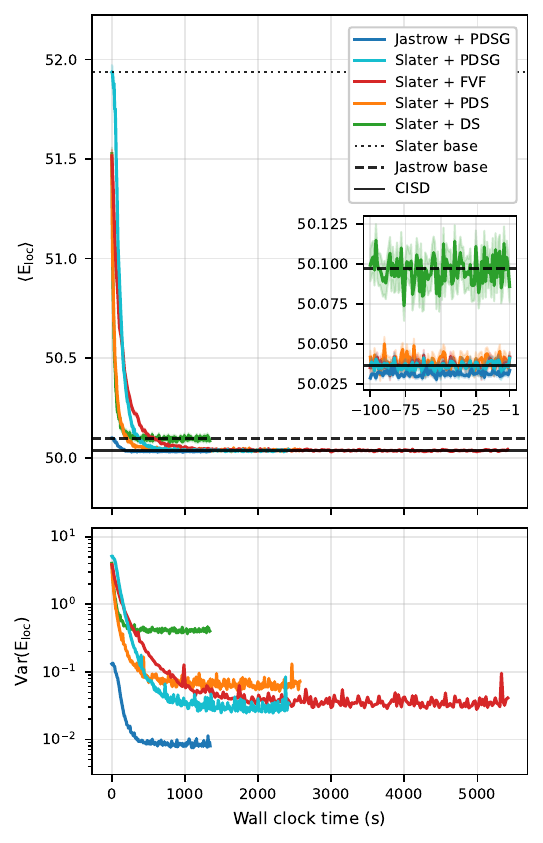}}%
  \caption{CNF wavefunctions were trained on a system of $N = 10$ harmonically trapped electrons in three dimensions. Each model was trained using 2 GPUs. Results are shown for each vector field architecture using a Slater base wavefunction; the PDSG model was also trained using a Jastrow-Slater base wavefunction. 
  The upper panels show the convergence of the variational energy estimate $\langle \Eloc \rangle$. Shaded regions indicate $\pm 1$ standard error. The energies of the two base wavefunctions are plotted as horizontal lines, along with a CISD reference energy. Insets show energies over the final 100 iterations of training for each model, corresponding to the convergence window.
  The lower panels show the convergence of the local energy variance $\mathrm{Var}(\Eloc)$.
  Convergence is shown in terms of both iteration count (a) and wall-clock time (b). The first iteration is excluded from the wall-clock plot (b) as compilation times can vary significantly over runs.
  Of the models trained using a Slater base, the PDSG model achieves the lowest converged energy and local energy variance, followed by FVF, PDS, and then DS. When trained using a Jastrow-Slater base, the same PDSG model converges faster and reaches lower energy and local energy variance.
  }\label{fig:convergence}
\end{figure*}

Figure~\ref{fig:convergence} compares the four vector field architectures on a system of $N = 10$ harmonically trapped electrons in three dimensions. Convergence of energy and local energy variance is shown against both training iteration (a) and wall-clock time (b). Of the models trained using a Slater base, PDSG achieves the lowest converged energy and local energy variance values, followed by FVF, PDS, and then DS. The separation is clearest in the local energy variance. Relative to the CISD reference energy, the three most accurate architectures converge close to CISD, while DS remains well above it. The architectures also differ in wall-clock convergence: DS converges fastest, PDS and PDSG are comparable, and FVF is the slowest, consistent with its multi-layer cross-stream interactions. PDSG thus achieves the highest accuracy while remaining competitive in wall-clock time, giving it the most favorable accuracy-cost tradeoff among the architectures.

\subsection{Choice of Base Wavefunction}

Having identified PDSG as the strongest architecture on the Slater base, we also trained a PDSG CNF with a Jastrow-Slater base. Both the converged energy and the local energy variance improve substantially over the Slater-base PDSG model, and training also converges faster, with wall-clock time comparable to the Slater-base DS model. Two factors contribute: a CNF with a Jastrow-Slater base begins training from a lower energy, and it builds in the correct electron-electron cusp behavior, leaving less residual correlation for the flow to capture. Notably, the Jastrow-Slater PDSG model converges below the CISD reference energy.
% Because both CISD and the variational CNF energy are upper bounds on the true ground-state energy, this lower energy is a strictly better estimate and provides strong evidence for the quality of the model. 
% The importance of the base is further underscored by the DS model, which on the Slater base converges only to an energy comparable to the Jastrow-Slater base at initialization.

\subsection{Per-Model Timing}

\begin{figure}
  \centering
  \includegraphics[width=\colfigwidth]{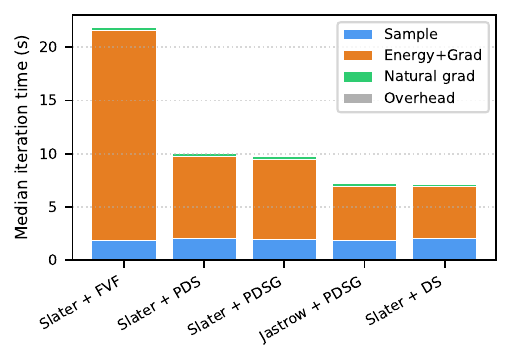}
  \caption{Median per-iteration wall-clock time for the training runs shown in Figure~\ref{fig:convergence}, with $N = 10$ harmonically trapped electrons and 2 GPUs. Cost is broken down by sampling time, energy and gradient computation time, natural gradient computation, and remaining per-iteration overhead. 
  The dominant cost is the energy and gradient computation, which varies greatly across models. Of the models trained using a Slater base, DS is cheapest, followed by PDSG, then PDS, then FVF. The PDSG model becomes significantly faster per-iteration when trained with a Jastrow-Slater base.}
  \label{fig:timing}
\end{figure}

For the $N = 10$ system, Figure~\ref{fig:timing} breaks down the median per-iteration wall-clock time for each model into sampling, energy and gradient computation, natural gradient computation, and remaining overhead. The dominant cost for each model is the energy and gradient computation, which varies greatly across architectures. Among the Slater-base models, DS is the cheapest, followed by PDSG, then PDS, then FVF; FVF is the most expensive, owing to its multi-layer cross-stream interactions and larger effective feature dimension per layer. Training the PDSG model with a Jastrow-Slater base rather than a Slater base reduces its per-iteration cost substantially. This reduction can be traced to the ODE solve: after an initial warm-up period of training, the Slater-base models take between four and seven adaptive Tsit5 steps per solve, whereas the Jastrow-Slater PDSG model never takes more than one. Because the Jastrow factor already encodes the electron-electron cusp and short-range repulsion, the flow transformation is milder and closer to the identity, making the ODE substantially easier to integrate. The per-iteration cost of the flow models is dominated by the ODE solve, which can be reduced by increasing the GPU count (see Figure~\ref{fig:scaling}).

\subsection{GPU Scaling}\label{sec:scaling}

\begin{figure*}
  \centering
  \includegraphics[width=\textwidth]{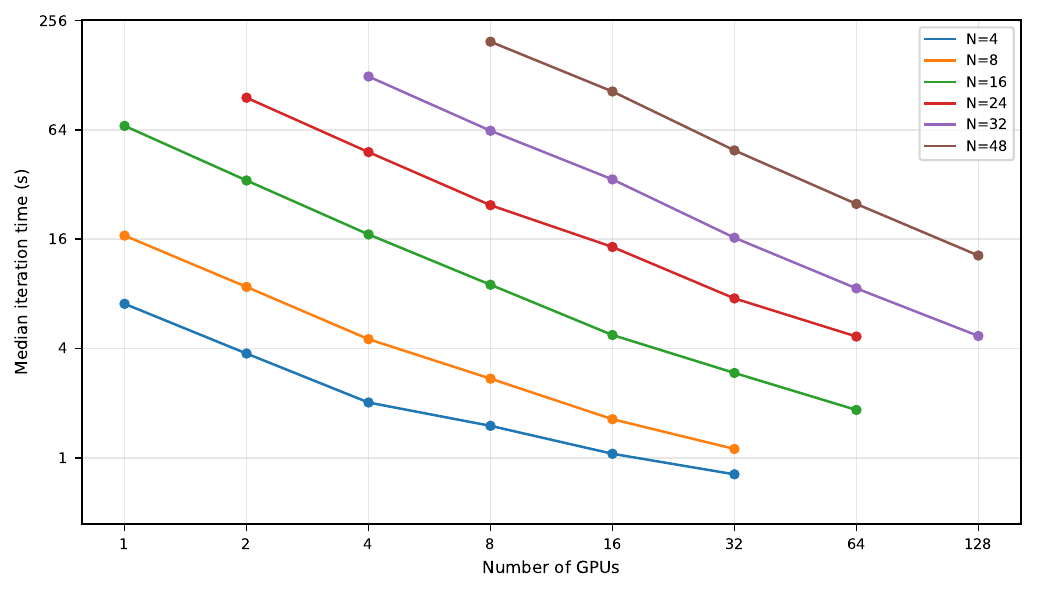}
  \caption{Per-iteration wall-clock time as a function of GPU count, for the PDS CNF model using a Slater base. Each time is the median over the first 10 training iterations. Results are shown for $N \in \{4, 8, 16, 24, 32, 48\}$ electrons in three dimensions on up to 128 NVIDIA A100s, fixing 10 ODE steps per iteration.
  The model achieves near-ideal strong scaling across all tested system sizes. 
  }
  \label{fig:scaling}
\end{figure*}

Figure~\ref{fig:scaling} reports strong scaling results for the PDS CNF model, measured on GPU-accelerated nodes of the Perlmutter machine at NERSC, an HPE Cray EX supercomputer with four NVIDIA A100 GPUs and one AMD EPYC 7763 CPU per node. The model was tested on 1 to 128 GPUs for $N \in \{4, 8, 16, 24, 32, 48\}$, with the batch size of 4096 distributed evenly across GPUs in all cases.

The PDS CNF model achieves near-ideal strong scaling across all tested system sizes, with per-iteration time roughly halving at each doubling of GPU count. 
This scaling is a direct consequence of the offline sampling and data-parallel training strategy: base samples are cached once and transformed through independent, parallel ODE solves, introducing no sequential dependence across GPUs.

Two departures from ideal scaling appear. First, per-iteration time increases slightly at the 4-to-8 GPU boundary, where execution first spans multiple nodes and incurs inter-node communication. This slowdown is seen most clearly in the $N=4$ curve. Second, the smaller-$N$ scaling curves begin to flatten as the number of GPUs increases. This flattening is consistent with Amdahl's law: at fixed batch size, the parallelizable ODE work per GPU shrinks as GPUs are added, until the fixed serial and communication costs limit further speedup. Smaller systems, whose per-iteration compute is smaller, reach this limit at lower GPU counts. Both of these effects become negligible for our larger-$N$ runs.

Model parameters are replicated across all GPUs, so communication is required only to synchronize parameters at each update step. The natural gradient (MinSR) solve is not parallelized, but it accounts for a small fraction of the per-iteration cost (see Figure~\ref{fig:timing}), and its cost is set by the number of samples rather than by $N$.
% , since the MinSR solve inverts a matrix whose dimension equals the number of samples. 
As $N$ grows, the parallelizable ODE work per iteration increases while these serial and communication costs remain essentially fixed, so the compute-to-communication ratio improves. We therefore expect near-ideal scaling to extend to higher GPU counts as $N$ increases, provided the number of samples is held fixed.

\subsection{Converged Energies Across System Sizes}

\begin{figure*}
  \centering
  \includegraphics[width=\textwidth]{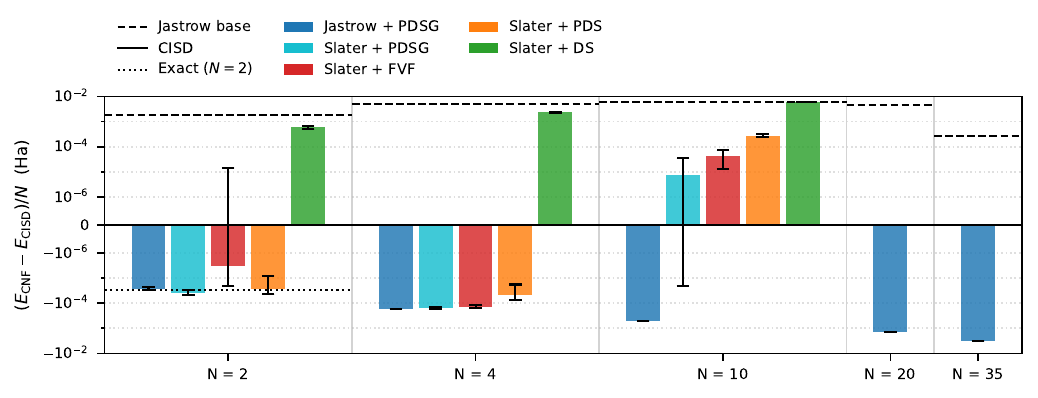}
  \caption{Converged per-particle energy of CNF wavefunctions relative to the CISD reference $(E_\mathrm{CNF} - E_\mathrm{CISD})/N$, each estimated from an inference batch of $409{,}600$ samples drawn from the trained flow. Error bars indicate $\pm\,\mathrm{stderr}/N$.
  Each model was trained using a Slater base on systems of $N \in \{2, 4, 10\}$ electrons; PDSG was additionally trained using a Jastrow-Slater base for $N \in \{2, 4, 10, 20, 35\}$ electrons. 
  The dotted line marks the numerically exact energy for $N = 2$. Results are plotted using a symlog scale.
  The PDSG model with Jastrow-Slater base achieves energies below the CISD reference for all tested system sizes.
  }
  \label{fig:energies}
\end{figure*}

Figure~\ref{fig:energies} reports converged per-particle energies relative to the CISD reference across system sizes, each obtained from a large-sample inference pass at the lowest-variance checkpoint (see Appendix~\ref{app:hyperparameters}). With a Slater base, accuracy generally improves from DS to PDS to FVF to PDSG. DS is by far the least accurate, with its gap above CISD widening as $N$ grows. The more expressive architectures give energies below the CISD reference at small $N$ and drift above it as $N$ increases, with PDSG remaining essentially at the reference for $N = 10$. 

Trained with a Jastrow-Slater base, the PDSG model improves upon the Slater-base PDSG model at every system size where both were tested, and converges below the CISD reference at all tested system sizes, up to $N = 35$. Because both the CISD and CNF energies are variational upper bounds on the exact ground-state energy (up to statistical error and ODE solver precision), an energy below CISD is a strictly better estimate. To provide a sense of scale, the Jastrow-Slater base energy lies just below the Hartree-Fock energy for all tested system sizes.

%%====================================================================
\section{Discussion}
\label{sec:discussion}
%%====================================================================

\paragraph{Physical interpretation.}
The four architectures represent distinct design choices for encoding interparticle correlations. DS aggregates all particle information through a single symmetric sum in $D$-dimensional space, giving the vector field no direct access to interparticle displacements; this limitation is most apparent at large $N$ with strong Coulomb repulsion, where DS converges to substantially higher energies than the other models. PDS addresses this by adding a pairwise stream with direct access to displacement vectors, enabling geometry-aware particle repulsion. FVF further adds expressivity, using multi-layer cross-stream communication with latent-space aggregation at each layer to iteratively refine particle representations. PDSG represents a structurally distinct approach, restricting the vector field to be the gradient of a scalar potential; it does, however, retain the pairwise stream of PDS and the latent-space aggregation of FVF. Experimentally, PDSG achieves the best accuracy among the CNF models, outperforming FVF despite the more constrained parameterization; FVF achieves the second-highest accuracy at the highest per-ODE-step cost. Notably, PDSG achieves the best accuracy with the fewest trainable parameters of the four architectures (Table~\ref{tab:param_count}). The strong performance of both relative to PDS is consistent with the benefit of latent-space aggregation over direct output-space pooling. 
The Jastrow-Slater base captures the short-range behavior near electron coalescence that the flow might otherwise struggle to represent, as evidenced by the significantly lower starting energy and improved accuracy at convergence compared to the Slater base alone. This motivates a general design principle for CNF-based VMC: the base wavefunction should incorporate as much known physical structure as possible (cusp conditions, nodal surfaces, long-range decay) so that the flow addresses only the residual correlations not captured analytically.

\paragraph{Connection to quantum-dot and cold-atom physics.}
The harmonically trapped spinless electron system studied here shares the Hamiltonian structure of quantum-dot and cold-atom systems, where confinement and Coulomb-type interactions arise naturally. A realistic quantum-dot simulation would additionally involve two-dimensional confinement, spin-carrying electrons, and an external magnetic field~\cite{reimann2002}; the flow framework could be extended to handle each of these.

The results demonstrate that CNF-based VMC matches or surpasses CISD reference energies for systems of up to $N = 35$ electrons, suggesting the approach is well-suited to the parameter sweeps (over $\omega$, $k$, or $N$) common in the study of strongly correlated quantum dots, where many independent ground-state calculations are required~\cite{reimann2002}. The offline sampling structure makes such sweeps especially efficient. The degree of correlation is set by the single dimensionless coupling $k/\sqrt{\omega}$, which in a quantum-dot context corresponds to the degree of Coulomb screening from the background material. A sweep over correlation strength can therefore be performed at fixed $\omega$ by varying the interaction strength $k$ alone; because the Slater base is the non-interacting ground state, which depends on $\omega$ but not $k$, a single pregenerated sample cache then serves the entire sweep, amortizing its generation cost across all values of $k$. The Jastrow-Slater base, by contrast, depends on $k$ as well as $\omega$, and must be regenerated for each value of $k$.

\paragraph{Limitations and scope.}
The most fundamental limitation of the flow ansatz concerns the nodal topology of the base wavefunction. As established in Section~\ref{sec:cusps}, the flow preserves nodal topology, so the base must match the nodal topology of an exact ground state. For a closed-shell particle number, the non-interacting ground state is a unique, nondegenerate Slater determinant, conjectured to share the nodal topology of the interacting ground state~\cite{nodal_topology}; this motivates the closed-shell values $N \in \{4, 10, 20, 35\}$ studied here. We also study $N = 2$, whose non-interacting ground state is degenerate but for which a single determinant still carries the correct nodal topology. For a general open-shell $N$, the exact ground state need not be expressible as a single Slater determinant, and a single-determinant base can have the wrong nodal topology; such systems would require a multi-determinant base to supply the correct topology, within which the flow could then refine the nodal surface.

The present work studies spinless (fully spin-polarized) electrons, which are representative of uniform-spin fermionic systems and avoid the complication of spin-up/spin-down channels. Extension to spin-$1/2$ electrons requires treating two species of fermions with separate antisymmetry; this can be done within the flow framework by using a block-diagonal Slater matrix in the base and modifying the permutation-equivariant vector field architecture to process spin-labeled inputs. For spin-$1/2$ electrons, the Kato cusp conditions differ for same-spin and opposite-spin pairs, so the Jastrow factor requires modification as well. Setting the interaction strength to $k=1$, the cusp parameters become $a = 1/4$ for parallel-spin coalescence and $a = 1/2$ for antiparallel-spin coalescence. 
% The single-parameter Jastrow used here must therefore be replaced by a spin-dependent form with separate terms for each pair type.

The harmonic trap provides a clean test environment but lacks the electron-nuclear Coulomb singularities present in molecular systems. These singularities impose cusp conditions at each nuclear position, with cusp parameter proportional to the nuclear charge $Z$; for molecules with multiple nuclear species, $Z$ varies across sites. Satisfying these conditions in the base wavefunction is non-trivial: standard contracted Gaussian basis sets only approximate the correct short-range behavior near nuclei, and a dedicated electron-nuclear Jastrow factor, analogous to the electron-electron Jastrow used here, would be needed to enforce exact cusp conditions in the base. The nuclear cusp presents an additional complication compared to the electron-electron case: because the nuclear potential is attractive, electrons are frequently found near nuclei, and high-probability configurations lie precisely in the region where an unsatisfied cusp condition produces large local energy variance. In contrast, the repulsive electron-electron interaction suppresses configurations near electron-electron coalescence in the Born distribution, making electron-electron cusp violations less damaging in practice. Extending the flow training itself to reliably handle the molecular setting remains an open implementation challenge and an interesting direction for future work.

\paragraph{Connection to diffusion Monte Carlo.}
Fixed-node DMC achieves near-exact ground-state accuracy by applying imaginary-time projection within each nodal domain of a trial wavefunction; the residual fixed-node error depends entirely on the quality of the trial nodes, and with exact nodes the exact ground-state energy is recovered~\cite{foulkes2001}. A natural application of the CNF framework is to provide high-quality trial wavefunctions for DMC. The base wavefunction fixes the nodal topology, and the flow deforms the nodal surface within that topology to minimize the variational energy, producing a trial wavefunction with both correct nodal topology and well-optimized nodal domains. DMC subsequently refines the energy within the learned nodal domains. Applying DMC to CNF trial wavefunctions is therefore a natural direction for future work.

\paragraph{Future directions.}
Beyond the DMC application discussed above, further extensions of this work include: (i) richer correlated base wavefunctions, such as multi-determinant CI expansions, which would provide a more accurate starting point for the flow and may be necessary for achieving the correct nodal topology for more general open-shell systems; (ii) more expressive equivariant vector field architectures, including attention-based vector field models \cite{bilos}; (iii) extension to molecules with electron-nuclear interactions, where the offline base-sample strategy is particularly attractive as the MCMC sampling cost for the base need only be paid once per molecular geometry; (iv) the periodic setting for solid-state materials, where the configuration space has toroidal rather than Euclidean topology and the base wavefunction would be built from Bloch orbitals; (v) time-dependent vector fields, which are strictly more expressive than the time-independent architectures used here; whether this additional expressivity is necessary for more complex systems remains an open question.

%%====================================================================
\section{Conclusion}
\label{sec:conclusion}
%%====================================================================

We have presented a CNF-based framework for fermionic VMC that refines a fixed antisymmetric base wavefunction through a permutation-equivariant flow, with all base sampling performed offline. Three novel vector field architectures (Pairwise Deep Sets, FermiNet Vector Fields, and Pairwise Deep Sets Gradient) and an augmented dynamics formulation for efficient kinetic energy computation make the framework both accurate and computationally tractable.

Numerical experiments on harmonically trapped spinless electrons in three dimensions demonstrate ground-state energies surpassing CISD reference values for system sizes up to $N = 35$. Scaling experiments demonstrate near-ideal strong GPU scaling from 1 to 128 A100s for system sizes up to $N = 48$. Accuracy improved as the base incorporated more physical structure: the Jastrow-Slater base, which builds in the electron-electron cusp and short-range repulsion, reached lower energies than the bare Slater base. By refining an existing ansatz rather than replacing it, and by decoupling wavefunction expressivity from MCMC cost, the framework offers a scalable route to improving established wavefunction methods with neural networks.
% We have presented a CNF-based framework for fermionic VMC in which offline base sampling enables embarrassingly parallel training without any MCMC at training time. Three novel permutation-equivariant vector field architectures (Pairwise Deep Sets, FermiNet Vector Fields, and Pairwise Deep Sets Gradient) and an augmented dynamics formulation for efficient kinetic energy computation make the framework both accurate and computationally tractable. The offline structure additionally supports expressive base wavefunctions, demonstrated here with a Jastrow-Slater base. In particular, choices of base with higher sampling cost relative to single Slater determinant wavefunctions are not precluded, since they only need to be sampled offline.

% Numerical experiments on harmonically trapped spinless electrons in three dimensions demonstrate ground-state energies surpassing CISD reference values for system sizes up to $N = 35$. Scaling experiments demonstrate near-ideal strong GPU scaling from 1 to 128 A100s for system sizes up to $N = 48$. The offline sampling structure decouples wavefunction expressivity from MCMC cost, an advantage that grows with system size as MCMC mixing times increase while ODE solves remain embarrassingly parallel.

\begin{acknowledgments}
%This manuscript was edited with AI assistance (Claude, Anthropic) for grammar and proofreading. All technical content and decisions are the authors' own.

%GPU training runs were carried out using compute credits provided through Modal's startup program. These credits supported all training and benchmarking experiments in this work other than the large-scale GPU scaling runs reported in Section~\ref{sec:scaling}.

%James Larsen acknowledges support from the Department of Energy Computational Science Graduate Fellowship under Award Number DE-SC0024386. This research used resources of the National Energy Research Scientific Computing Center, a DOE Office of Science User Facility supported by the Office of Science of the U.S. Department of Energy
%under Contract No. DE-AC02-05CH11231 using NERSC award NERSC DDR-ERCAP0038472.

We thank Modal Labs, Inc.\ for compute credits provided through its startup program, which supported all training and benchmarking experiments in this work other than the large-scale GPU scaling runs reported in Section~\ref{sec:scaling}; those runs used resources of the National Energy Research Scientific Computing Center (NERSC), a DOE Office of Science User Facility supported by the Office of Science of the U.S. Department of Energy under Contract No.~DE-AC02-05CH11231, using NERSC award DDR-ERCAP0038472. James Larsen acknowledges support from the Department of Energy Computational Science Graduate Fellowship under Award Number DE-SC0024386.
\end{acknowledgments}

\bibliography{references}

%%====================================================================
%% Appendices
%%====================================================================
\appendix

\section{FermiNet Architecture and FermiNet Vector Fields}
\label{app:ferminet}

This appendix summarizes the FermiNet architecture~\cite{ferminet} and describes how it is adapted into the FermiNet Vector Field (FVF) introduced in Section~\ref{sec:ferminet_vf}. FermiNet processes a configuration $x = (x_1,\ldots,x_N)\in\R^{ND}$ through two coupled streams of learned features: a one-electron stream $h_i^{(l)}\in\R^{H_1}$ indexed by particle and a two-electron stream $h_{ij}^{(l)}\in\R^{H_2}$ indexed by ordered pair $(i,j)$.

\paragraph{Input features.}
Let $r_{ij} = x_j - x_i\in\R^D$. The streams are initialized by linear projections with tanh activations:
\begin{align}
  h_i^{(0)} &= \tanh\bigl(W_1^{(0)}\,x_i\bigr), \label{eq:fn_init_one} \\
  h_{ij}^{(0)} &= \tanh\bigl(W_2^{(0)}\,r_{ij}\bigr), \label{eq:fn_init_two}
\end{align}
where $W_1^{(0)}\colon\R^{D}\to\R^{H_1}$ and $W_2^{(0)}\colon\R^{D}\to\R^{H_2}$. The standard FermiNet additionally supplies the distances $\|x_i\|$ and $\|r_{ij}\|$ as input features; FVF omits them and uses only $x_i$ and $r_{ij}$.

\paragraph{Interaction layers.}
Each layer $l=1,\ldots,N_L$ updates both streams. The one-electron update aggregates mean electron and pairwise information:
\begin{align}
  h_i^{(l)} = h_i^{(l-1)} +\tanh\Bigl( &W_1^{(l)}\Bigl[h_i^{(l-1)},\;\frac{1}{N}\sum_{i'=1}^N h_{i'}^{(l-1)},\; \notag \\
  &\frac{1}{N}\sum_{j=1}^N h_{ij}^{(l-1)}\Bigr]\Bigr),
  \label{eq:fn_one_update}
\end{align}
where $[\cdot,\cdot,\cdot]$ denotes concatenation and $W_1^{(l)}\colon\R^{2H_1+H_2}\to\R^{H_1}$. The two-electron update is applied independently to each pair:
\begin{equation}
  h_{ij}^{(l)} = h_{ij}^{(l-1)} + \tanh\bigl(W_2^{(l)}\,h_{ij}^{(l-1)}\bigr),
  \label{eq:fn_two_update}
\end{equation}
where $W_2^{(l)}\colon\R^{H_2}\to\R^{H_2}$. Permutation equivariance of $h_i^{(l)}$ follows by induction: permuting the input particles permutes the one- and two-electron features in the same way at every layer.

\paragraph{FermiNet wavefunction readout.}
After $N_L$ layers, the one-electron features are mapped to $N_{\mathrm{det}}$ sets of $N$ single-particle orbitals, and the wavefunction is formed as a weighted sum of $N_{\mathrm{det}}$ determinants of these orbitals, with each orbital modulated by a decaying envelope~\cite{ferminet}. Antisymmetry follows from the alternating property of the determinant: permuting particles permutes the rows of each determinant, changing its sign.

\paragraph{FermiNet Vector Fields.}
FVF retains the input embeddings~\eqref{eq:fn_init_one}--\eqref{eq:fn_init_two} and the $N_L$ interaction layers~\eqref{eq:fn_one_update}--\eqref{eq:fn_two_update}, with the particle configuration $x$ replaced by the trajectory state $y$ throughout, and replaces the orbital projection and determinant readout with a linear map directly to particle displacements:
\begin{equation}
  v_i(y) = W_{\mathrm{out}}\,h_i^{(N_L)},
  \label{eq:fvf_readout}
\end{equation}
where $W_{\mathrm{out}}\colon\R^{H_1}\to\R^D$. The output weights are initialized near zero so that the flow begins close to the identity. Permutation equivariance of $v$ is inherited from the equivariance of $h_i^{(N_L)}$.

\section{Laplacian Chain Rule Identity}
\label{app:laplacian_chain_rule}

\begin{lemma}[Laplacian chain rule]
\label{lem:laplacian_chain_rule}
Let $z\colon \R^n \to \R^m$ and $s\colon \R^m \to \R$ be smooth, and write $J = \partial z / \partial x \in \R^{m \times n}$ and $L_k = \Delta_x z_k$ for the coordinate Laplacians of $z$. Then
\begin{equation}
  \Delta_x[s(z(x))]
      = \tr(J^\top H_s J) + \nabla_z s \cdot L,
  \label{eq:laplacian_chain_rule}
\end{equation}
where $H_s = \partial^2 s / \partial z^2 \in \R^{m \times m}$ is the Hessian of $s$ at $z(x)$.
\end{lemma}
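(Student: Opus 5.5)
The plan is to compute the Laplacian directly in coordinates, applying the ordinary chain rule twice. Set $u(x) := s(z(x))$. Since $z$ and $s$ are smooth, $u$ is smooth. For each coordinate $x_a$, $a = 1,\ldots,n$, the first-order chain rule gives
\begin{equation*}
  \frac{\partial u}{\partial x_a} = \sum_{k=1}^m \frac{\partial s}{\partial z_k}(z(x))\, J_{ka},
  \qquad J_{ka} = \frac{\partial z_k}{\partial x_a}.
\end{equation*}

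Next I will differentiate once more in $x_a$ using the product rule. This produces two terms. In the first, the derivative falls on $\partial s/\partial z_k(z(x))$; applying the chain rule again gives $\sum_{l} (H_s)_{kl} J_{la}$. In the second, it falls on $J_{ka}$, which gives $\partial^2 z_k/\partial x_a^2$. Combining these,
\begin{equation*}
  \frac{\partial^2 u}{\partial x_a^2}
  = \sum_{k,l=1}^m J_{ka}\,(H_s)_{kl}\,J_{la}
  + \sum_{k=1}^m \frac{\partial s}{\partial z_k}\,\frac{\partial^2 z_k}{\partial x_a^2}.
\end{equation*}

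Finally I sum over $a$. The first term becomes $\sum_a (J^\top H_s J)_{aa} = \tr(J^\top H_s J)$. In the second term, exchanging the finite sums gives $\sum_k (\nabla_z s)_k \,\Delta_x z_k = \nabla_z s \cdot L$, which establishes~\eqref{eq:laplacian_chain_rule}.

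There is no real obstacle here. The only point needing care is the index bookkeeping: $J$ is $m\times n$, so the diagonal entries of the $n\times n$ matrix $J^\top H_s J$ are what appear in the trace. Smoothness guarantees that all the second derivatives exist and that the product and chain rules apply. Nothing from earlier in the paper is needed.
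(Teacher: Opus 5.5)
Your proof is correct and follows essentially the same route as the paper: the first-order chain rule, a second differentiation via the product rule, and a sum over coordinates. The only difference is bookkeeping. You read the first term directly as the diagonal entries $(J^\top H_s J)_{aa}$, whereas the paper writes it as the Frobenius pairing of $H_s$ with $JJ^\top$ and then uses the symmetry of $H_s$ and cyclicity of the trace, a detour your version avoids.
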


\begin{proof}
We apply the chain rule twice. The first-order chain rule gives
\begin{equation}
  \frac{\partial}{\partial x_j}s(z(x))
      = \sum_k \frac{\partial s}{\partial z_k}\frac{\partial z_k}{\partial x_j}.
\end{equation}
Differentiating again with respect to $x_j$ using the product rule on each term in the sum,
\begin{equation}
  \frac{\partial^2}{\partial x_j^2}s(z(x))
      = \sum_{k, l} \frac{\partial^2 s}{\partial z_k \partial z_l}
         \frac{\partial z_k}{\partial x_j}\frac{\partial z_l}{\partial x_j}
         + \sum_k \frac{\partial s}{\partial z_k}
           \frac{\partial^2 z_k}{\partial x_j^2}.
\end{equation}
Summing over $j = 1,\ldots,n$ gives $\Delta_x[s(z(x))]$ on the left. Substituting $J_{kj} = \partial z_k / \partial x_j$,
\begin{align}
  \Delta_x[s(z(x))]
      =& \sum_{k,l}(H_s)_{kl} \sum_j J_{kj} J_{lj} \notag \\
      &+ \sum_k (\nabla_z s)_k \sum_j \frac{\partial^2 z_k}{\partial x_j^2}.
\end{align}
The inner sum $\sum_j J_{kj}J_{lj}$ is the $(k,l)$ entry of $JJ^\top \in \R^{m\times m}$, and $\sum_j \partial^2 z_k/\partial x_j^2 = \Delta_x z_k = L_k$ by definition, giving
\begin{equation}
  \Delta_x[s(z(x))]
      = \sum_{k,l}(H_s)_{kl}(JJ^\top)_{kl} + \nabla_z s \cdot L.
\end{equation}
The double sum $\sum_{k,l}(H_s)_{kl}(JJ^\top)_{kl}$ is the Frobenius inner product of $H_s$ with $JJ^\top$, equal to $\tr(H_s^\top JJ^\top)$. Since $s$ is smooth, its mixed partial derivatives commute ($\partial^2 s/\partial z_k \partial z_l = \partial^2 s/\partial z_l \partial z_k$ for all $k,l$), so $H_s$ is symmetric and $H_s^\top = H_s$, reducing this to $\tr(H_s JJ^\top)$. The cyclic property of trace then gives $\tr(H_s JJ^\top) = \tr(J^\top H_s J)$, yielding~\eqref{eq:laplacian_chain_rule}.
\end{proof}

\section{Proof of Proposition~\ref{prop:augmented_dynamics}}
\label{app:augmented_dynamics}

\begin{proof}
The proof proceeds by differentiating each augmented state variable with respect to $t$, exchanging the order of time and spatial differentiation, and applying the chain rule identities of Appendix~\ref{app:laplacian_chain_rule}. We treat each equation of the proposition in turn.

\medskip\noindent\textit{Equation~\eqref{eq:aug_dy}}:
The forward ODE~\eqref{eq:forward_ode} maps a base sample $z$ to $x = f(z)$ by integrating $\dot{y} = v(y)$ from $t = 0$ to $t = 1$ with $y(0) = z$. Integrating the same ODE backward from $t = 1$ to $t = 0$ with initial condition $y_1(1) = x$ reverses this trajectory, giving $y_1(0) = z = f^{-1}(x)$.

\medskip\noindent\textit{Equation~\eqref{eq:aug_dJ}}:
$y_2 = \partial y_1/\partial x$ is the Jacobian of $y_1$ with respect to $x$. At $t = 1$, $y_2(1) = \partial x/\partial x = I$. Differentiating both sides of $\dot{y}_1 = v(y_1)$ with respect to $x$ and exchanging the order of $\mathrm{d}/\mathrm{d}t$ and $\partial/\partial x$ yields
\begin{equation}
  \frac{\mathrm{d}y_2}{\mathrm{d}t}
      = \frac{\partial}{\partial x}\frac{\mathrm{d}y_1}{\mathrm{d}t}
      = \frac{\partial}{\partial x}v(y_1)
      = \frac{\partial v}{\partial y_1}\frac{\partial y_1}{\partial x}
      = J_v\,y_2.
\end{equation}

\medskip\noindent\textit{Equation~\eqref{eq:aug_dL}}:
$y_3$ collects the coordinate Laplacians of $y_1$ with respect to $x$: $(y_3)_k = \Delta_x(y_1)_k$. At $t = 1$, $y_1(1) = x$, so $(y_3(1))_k = \Delta_x x_k = 0$. Differentiating with respect to $t$ and exchanging $\mathrm{d}/\mathrm{d}t$ with $\Delta_x$ gives
\begin{equation}
  \frac{\mathrm{d}(y_3)_k}{\mathrm{d}t}
      = \Delta_x\frac{\mathrm{d}(y_1)_k}{\mathrm{d}t}
      = \Delta_x[v_k(y_1)].
\end{equation}
Applying Lemma~\ref{lem:laplacian_chain_rule} yields
\begin{equation}
  \Delta_x[v_k(y_1)]
      = \tr(y_2^\top H_{v_k} y_2) + \nabla_{y_1} v_k \cdot y_3.
\end{equation}
Since $\nabla_{y_1} v_k = (J_v)_{k,:}$ is the $k$-th row of $J_v$, the dot product $\nabla_{y_1} v_k \cdot y_3 = (J_v)_{k,:}\,y_3$. Stacking over $k = 1,\ldots,ND$ yields~\eqref{eq:aug_dL}.

\medskip\noindent\textit{Equation~\eqref{eq:aug_dlambda}}:
$y_4$ tracks $\log|\det(\partial y_1/\partial x)|$, the log-determinant of the Jacobian of $y_1$ with respect to $x$. At $t = 1$, $\partial y_1(1)/\partial x = \partial x/\partial x = I$, so $y_4(1) = \log|\det I| = 0$. Applying~\eqref{eq:inst_logdet} with $y = y_1$ and initial condition $z = x$ gives $\dot{y}_4 = \tr J_v$.

\medskip\noindent\textit{Equation~\eqref{eq:aug_dgamma}}:
$y_5 = \nabla_x y_4$ is the gradient of the log-determinant with respect to $x$. At $t = 1$, $y_4(1) = 0$, so $y_5(1) = \nabla_x 0 = 0$. Exchanging $\mathrm{d}/\mathrm{d}t$ and $\nabla_x$ and substituting $\dot{y}_4 = \tr J_v$,
\begin{equation}
  \frac{\mathrm{d}y_5}{\mathrm{d}t}
      = \nabla_x\frac{\mathrm{d}y_4}{\mathrm{d}t}
      = \nabla_x(\tr J_v(y_1(x))).
\end{equation}
Applying the gradient chain rule gives
\begin{align}
  \nabla_x(\tr J_v(y_1))
      &= \left(\frac{\partial y_1}{\partial x}\right)^\top \nabla_{y_1}(\tr J_v) \notag \\
      &= y_2^\top \nabla_{y_1}(\tr J_v).
\end{align}

\medskip\noindent\textit{Equation~\eqref{eq:aug_dl}}:
$y_6 = \Delta_x y_4$ is the Laplacian of the log-determinant with respect to $x$. At $t = 1$, $y_4(1) = 0$, so $y_6(1) = \Delta_x 0 = 0$. Exchanging $\mathrm{d}/\mathrm{d}t$ and $\Delta_x$ and substituting $\dot{y}_4 = \tr J_v$, we get
\begin{equation}
  \frac{\mathrm{d}y_6}{\mathrm{d}t}
      = \Delta_x\frac{\mathrm{d}y_4}{\mathrm{d}t}
      = \Delta_x(\tr J_v(y_1(x))).
\end{equation}
Applying Lemma~\ref{lem:laplacian_chain_rule} gives
\begin{equation}
  \Delta_x(\tr J_v(y_1))
      = \tr(y_2^\top H_{\tr J_v} y_2) + \nabla_{y_1}(\tr J_v)\cdot y_3.
\end{equation}
The values $y_1(0),\ldots,y_6(0)$ produced by integrating this system from $t=1$ to $t=0$ are the six quantities of~\eqref{eq:laplacian_decomp_pieces}; together with $\nabla_z s$ and $H_s$ evaluated at $z = y_1(0)$, they suffice to evaluate the local kinetic energy via~\eqref{eq:laplacian_decomp} and~\eqref{eq:kinetic_local}.
\end{proof}

\section{CISD Reference Energies}
\label{app:cisd}

\begin{figure*}
    \centering
    \includegraphics[width=\linewidth]{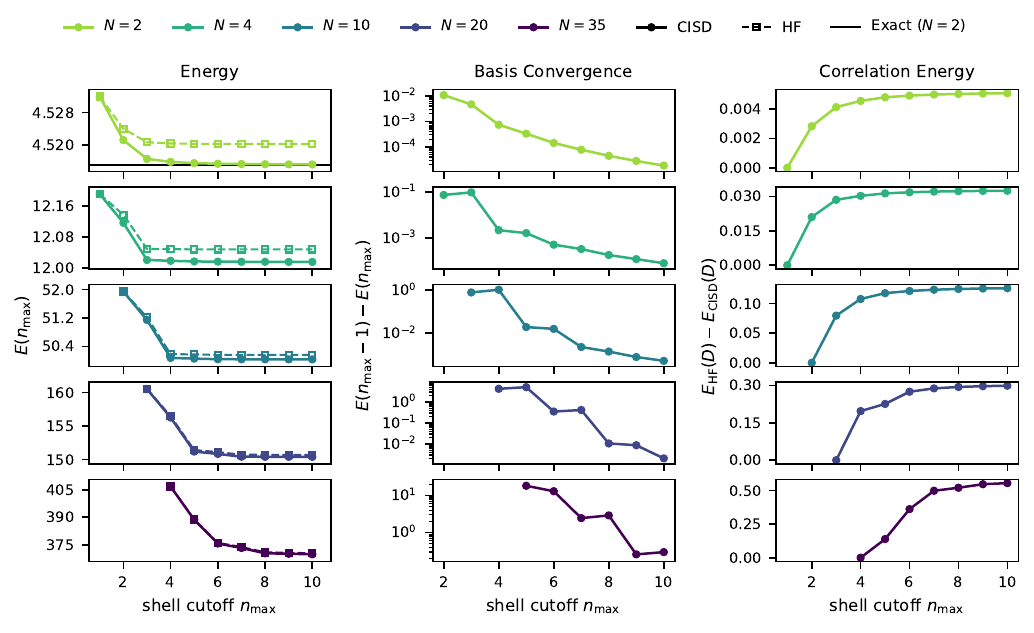}
    \caption{Convergence of the CISD reference energies with shell cutoff $n_{\max}$, for each system size $N$. \textit{Left:} energy $E(n_{\max})$ for CISD (solid) and Hartree-Fock (dashed), with the numerically exact energy for $N = 2$ shown as a solid black line. \textit{Center:} the successive decrease $E(n_{\max}-1) - E(n_{\max})$ of the CISD energy, showing its convergence with basis size. \textit{Right:} the correlation energy $E_\mathrm{HF}(n_{\max}) - E_\mathrm{CISD}(n_{\max})$ recovered by CISD relative to the correlation-free Hartree-Fock energy, at each shell cutoff $n_{\max}$.}
    \label{fig:cisd_convergence}
\end{figure*}

The reference energies used throughout Section~\ref{sec:experiments} are obtained using configuration interaction with single and double excitations (CISD), built on integral and solver routines from PySCF~\cite{pyscf}. This appendix describes the finite basis in which those calculations are performed and how it relates to the harmonic oscillator spectrum.

\paragraph{Basis set.}
% The CISD calculations use the trap frequency $\omega = 1$ and interaction strength $k = 1$ of our experiments.
We define the single-particle basis so that it spans exactly the space of the lowest harmonic oscillator solutions. Each oscillator eigenstate~\eqref{eq:ho_orbitals} is a product of Hermite polynomials in the coordinates $x_1, x_2, x_3$ and the fixed Gaussian $e^{-\omega |x|^2 / 2}$; since the Hermite polynomial $H_{n_d}$ has degree $n_d$, the state $\phi_n$ is a polynomial of total degree $|n| = n_1 + n_2 + n_3$ times that Gaussian. Ordering the oscillator states by $|n|$ into shells, the span of all states through shell $n_{\max}$ is the set of functions $p(x)\,e^{-\omega|x|^2/2}$ with $p$ a polynomial of total degree at most $n_{\max}$; this holds because the leading term of $\prod_d H_{n_d}$ is the monomial $\prod_d x_d^{n_d}$, so the products with $|n| \le n_{\max}$ and the monomials of degree $\le n_{\max}$ span the same polynomial space.

We reproduce this space with a Cartesian Gaussian basis, whose primitives are Cartesian monomials times a Gaussian, $x_1^{n_1} x_2^{n_2} x_3^{n_3}\, e^{-\alpha |x|^2}$. Fixing the exponent $\alpha = \omega/2$ to match~\eqref{eq:ho_orbitals} and including every monomial of total degree $|n| \le n_{\max}$, the resulting basis spans precisely the polynomials of degree $\le n_{\max}$ times the Gaussian, and hence the same space as oscillator shells $0$ through $n_{\max}$. We use Cartesian Gaussians because they are the standard primitives of quantum-chemistry integral codes, which lets us evaluate the required one- and two-electron integrals directly with PySCF~\cite{pyscf}. The number of Cartesian monomials of total degree exactly $|n|$ in three variables is $(|n|+1)(|n|+2)/2$, so the basis through degree $n_{\max}$ contains
\begin{align}
    N_{\mathrm{orb}} &= \sum_{|n|=0}^{n_{\max}} \frac{(|n|+1)(|n|+2)}{2} \notag \\
    &= \frac{(n_{\max}+1)(n_{\max}+2)(n_{\max}+3)}{6}
\end{align}
orbitals, equal to the number of oscillator states through shell $n_{\max}$. Table~\ref{tab:cisd_basis} lists the orbital count by cutoff through $n_{\max}=10$.

\begin{table}[h]
  \centering
  \caption{Number of basis orbitals by shell cutoff $n_{\max}$, equal to the number of harmonic oscillator states through shell $n_{\max}$.}
  \label{tab:cisd_basis}
  \small
  \setlength{\tabcolsep}{5pt}
  \begin{tabular}{lcccccccccc}
    \toprule
    $n_{\max}$ & 1 & 2 & 3 & 4 & 5 & 6 & 7 & 8 & 9 & 10 \\
    \midrule
    $N_{\mathrm{orb}}$ & 4 & 10 & 20 & 35 & 56 & 84 & 120 & 165 & 220 & 286 \\
    \bottomrule
  \end{tabular}
\end{table}

\paragraph{CISD energies.}
The trapped fermions are spinless, so we run CISD in the fully spin-polarized sector, where all electrons share one spin and the many-body state is antisymmetric under exchange, matching the physics of identical spinless fermions. At fixed $n_{\max}$, the CISD energy is a variational upper bound on the exact ground-state energy in that basis. For $N = 2$, singles and doubles already exhaust the full configuration space, so CISD coincides with full configuration interaction (FCI), which is exact within the finite basis.
% The dominant source of error is therefore basis-set incompleteness rather than the truncation of the excitation level. 
The reference energies reported in the main text use $n_{\max} = 10$ ($N_{\mathrm{orb}} = 286$). Figure~\ref{fig:cisd_convergence} shows CISD energies converging with $n_{\max}$ for each system size.

\section{Experimental Hyperparameters}
\label{app:hyperparameters}

This appendix collects the full architectural and training hyperparameters for all models evaluated in Section~\ref{sec:experiments}.

\paragraph{CNF architectures.}
DS used hidden dimension 32 for $g$ and $h$, each with 4 hidden layers. PDS used hidden dimension 32 for $g$ and $h$ and hidden dimension 16 for $p$, each with 4 hidden layers. FVF used one-electron stream hidden dimension 32, two-electron stream hidden dimension 16, and 4 interaction layers. PDSG used hidden dimension 32 for the single-particle embedding $q$ and 16 for the pairwise embedding $p$, each embedding with 3 hidden layers; each scalar readout used a single hidden layer, of width 32 for $a_q$ and 16 for $a_p$. The ODE solves for both CNF directions were carried out using the Tsit5 explicit Runge-Kutta method as implemented in Diffrax~\cite{diffrax}, with step sizes chosen adaptively by a PID controller so that the estimated local error at each step stays within the relative and absolute tolerances $\mathrm{rtol} = 10^{-7}$ and $\mathrm{atol} = 10^{-9}$. Total trainable parameter counts for each model are listed in Table~\ref{tab:param_count}.

\begin{table}[h]
  \centering
  \caption{Total trainable parameter count for each model.}
  \label{tab:param_count}
  \small
  \setlength{\tabcolsep}{5pt}
  \begin{tabular}{lcccc}
    \toprule
    & DS & PDS & FVF & PDSG \\
    \midrule
    Parameters & 6{,}790 & 7{,}721 & 11{,}747 & 4{,}226 \\
    \bottomrule
  \end{tabular}
\end{table}

\paragraph{Base sample cache generation.}
Samples from the base distributions were drawn using 4 GPUs per system, with 16 parallel Metropolis-Hastings chains per GPU. The step size was adapted during burn-in to target a 23.4\% acceptance rate; burn-in length was determined via the Gelman-Rubin $\hat{R}$ statistic (threshold 1.01), and the thinning factor was set adaptively from the integrated autocorrelation time $\tau$.

\paragraph{Training.}
The four vector field architectures were compared on the Slater base at $N \in \{2, 4, 10\}$. The Jastrow-Slater base was evaluated with the PDSG vector field, the best-performing architecture, which was trained for $N \in \{2, 4, 10, 20, 35\}$. The number of GPUs was scaled with system size: 1 GPU for $N = 2$ and $N = 4$, 2 GPUs for $N = 10$, 4 GPUs for $N = 20$, and 8 GPUs for $N = 35$.

MinSR was applied with relative pseudo-inverse cutoff $10^{-6}$, keeping the effective condition number below $10^6$. Parameter update norms were clipped to be at most 1.0.
The learning rate followed an inverse-time decay schedule $\eta(t) = \eta_0 / (1 + t / t_0)$ with $\eta_0 = 0.05$ and $t_0 = 25$, over a maximum of 1000 iterations.
We check convergence by fitting lines to the local energy mean and the log local-energy variance over a trailing window of $w = 100$ iterations. For each quantity, the fitted slope is multiplied by $w$ to give the total drift over the window; for the energy this is further divided by the mean per-particle energy magnitude over the window, yielding a dimensionless scaled slope $m_E$, while the log-variance slope is already a fractional rate of change and requires no normalization, yielding $m_V$ directly. Training is converged when both $m_E, m_V > -0.01$ for five consecutive iterations, indicating that neither the energy nor the variance is still meaningfully decreasing. 
Once convergence is declared, an inference pass selects the checkpoint within the final $100$ iterations with the lowest local energy variance. The energy of this checkpoint is then estimated from an inference batch of $409{,}600$ fresh samples drawn from $\Pi_\theta$; since these samples are obtained by pushing base samples through the flow, the inference batch size cannot exceed the base sample cache size. These estimates are the converged energies reported in Figure~\ref{fig:energies}. All computations use double-precision floating-point arithmetic throughout. The implementation is built using JAX~\cite{jax}, Equinox~\cite{equinox}, Diffrax~\cite{diffrax}, and folx~\cite{gao2023folx}.

\end{document}